\newtheorem{thm}{Theorem}
\newtheorem{corollary}{Corollary}
\renewcommand{\tilde}{\widetilde}
\begin{document}

\title[Burchnall-Chaundy polynomials and the Laurent phenomenon ]{Burchnall-Chaundy polynomials and the Laurent phenomenon}

\author{A.P. Veselov}
\address{Department of Mathematical Sciences,
Loughborough University, Loughborough LE11 3TU, UK  and Moscow State University, Moscow 119899, Russia}
\email{A.P.Veselov@lboro.ac.uk}

\author{R. Willox}\address{Graduate School of Mathematical Sciences, The University of
Tokyo,
3-8-1 Komaba, Meguro-ku, Tokyo 153-8914}
\email{willox@ms.u-tokyo.ac.jp}

\maketitle

\begin{abstract}
The Burchnall-Chaundy polynomials $P_n(z)$
are determined by the differential recurrence relation 
$$P_{n+1}'(z)P_{n-1}(z)-P_{n+1}(z)P_{n-1}'(z)=P_n(z)^2$$
with $P_{-1}=P_0(z)=1.$ The fact that this recurrence relation has all solutions polynomial is not obvious 
and is similar to the integrality of Somos sequences and the Laurent phenomenon.
We discuss this parallel in more detail and extend it to two difference equations
$$Q_{n+1}(z+1)Q_{n-1}(z)-Q_{n+1}(z)Q_{n-1}(z+1)=Q_n(z)Q_n(z+1)$$
and 
$$R_{n+1}(z+1)R_{n-1}(z-1)-R_{n+1}(z-1)R_{n-1}(z+1)=R^2_n(z)$$
related to two different KdV-type reductions of the Hirota-Miwa 
and Dodgson octahedral equations. As a corollary we have a new form of the Burchnall-Chaundy polynomials in terms of the initial data $P_n(0)$, which is shown to be Laurent.
\end{abstract}

\bigskip

\section{Introduction}

In the 1920s Burchnall and Chaundy \cite{BCh} discovered a remarkable sequence of polynomials satisfying the recurrence relation
\begin{equation}
\label{BCh}
P_{n+1}'(z)P_{n-1}(z)-P_{n+1}(z)P_{n-1}'(z)=P_n(z)^2
\end{equation}
with $P_{-1}=P_0(z)=1$, where $'$ means differentiation in $z$: 
$$P_1=z, \,\,P_2=\frac{1}{3} (z^3+\tau_2), \,\, P_3=\frac{1}{45}(z^6+5\tau_2 z^3+\tau_3 z-5\tau_2^2),$$
$$P_4 = \frac{1}{4725}(z^{10} + 15 \tau_2 z^7 + 7 \tau_3 z^5 -35 \tau_2\tau_3 z^2 + 175 \tau_2^3 z -\frac{7}{3} \tau_3^2 + \tau_4 z^3 + \tau_4 \tau_2 ),...$$
Note that at each step we have an additional integration constant because of the freedom in the solution of the differential equation
$$P_{n+1}(z) \to P_{n+1}(z)+c P_{n-1}(z)$$
(we are using here Adler and Moser's notation with $\tau_1=0$, see \cite{AM}).

These polynomials have been rediscovered several times, most notably by Stellmacher and Lagnese in the theory of Huygens' principle \cite{L-St}
and by Adler and Moser in the theory of rational solutions of the Korteweg-de Vries equation \cite{AM}. They appeared also in the bispectral theory of Duistermaat and Gr\"unbaum \cite{DG}, who explained their important role in the theory of monodromy-free Schr\"odinger operators and the relation to Schur polynomials with triangular Young diagrams.
The Burchnall-Chaundy polynomials $P_n(z)$ are special cases of the Schur-Weierstrass polynomials introduced by Buchstaber, Enolskii and Leykin in the theory of Klein sigma-functions  \cite{BEL} (see also Nakayashiki \cite{Nakayashiki} for the relation to KP tau functions and Sato theory).

Note that the very existence of these polynomials looks like a miracle since the relation (\ref{BCh}) can be rewritten as
$$
\frac{d}{dz}\frac{P_{n+1}}{P_{n-1}}=\frac{P_{n}^2}{P_{n-1}^2},
$$
which means that all the residues of the right-hand side must be zero (see the discussion of this in \cite{BCh} and \cite{AM}).

We would like to make the parallel with the sequence
$$
p_{n+1}p_{n-1}=p_n^2+1, \quad p_{-1}=p_0=1,
$$
which surprisingly is integer for all $n$: $1, 2, 5, 13, 34, 89,\dots$ (these in fact are nothing else but every other Fibonacci number).
This sequence is related to the cluster algebra of type $A_1^{(1)}$ and gives a nice example of the so-called Laurent phenomenon studied by Fomin and Zelevinsky:
for general initial data $p_{-1}$ and $p_0$, the solution $p_n$ is a Laurent polynomial in $p_{-1}$ and $p_0$ with integer coefficients (see \cite{CZ,FZ}).
In particular, if $p_{-1}=p_0=1$ this implies the integrality of the $p_n$ (which, of course, can be proven in an elementary way as well).
We see that the Burchnall-Chaundy sequence is a functional analogue of the same phenomenon with the role of integers $\mathbb Z$ played by the polynomial ring $\mathbb Q[z].$

This work started as an attempt to see if this conceptual parallel with the Laurent phenomenon can be made a real connection.
This led us naturally to the following {\it difference Burchnall-Chaundy equation}
\begin{equation}
\label{dBCh}
Q_{n+1}(z+1)Q_{n-1}(z)-Q_{n+1}(z)Q_{n-1}(z+1)=Q_n(z)Q_n(z+1) ,
\end{equation}
with $Q_{-1}(z)=Q_0(z)=1.$

We prove that this equation also has polynomial solutions $Q_n(z)$ with coefficients that are Laurent polynomials of the initial data 
$q_k=Q_k(0)$, such that
$$
A_n~\! Q_n(z) \in \mathbb Z [z; q_1^{\pm 1},\dots, q_{n-2}^{\pm 1}, q_{n-1}, q_{n}], \qquad A_n=\prod_{j=1}^n(2j-1)!! ,
$$
where $(2k+1)!!=1\times 3\times 5 \times\dots \times (2k+1).$
The first three polynomials have the form
$$
Q_1=z+q_1, \quad Q_2=\frac{z(z^2-1)}{3}+q_1 z^2+q_1^2z+q_2,
$$
$$
Q_3=\frac{z^2(z^2-1)(z^2-4)}{45}+\frac{2q_1z^5}{15}+\frac{q_1^2z^4}{3}+\frac{(q_1^3-q_1+q_2)z^3}{3}+\frac{(3q_1q_2-q_1^2)z^2}{3}$$
$$+(\frac{q_3}{q_1}+\frac{q_2^2}{q_1}+\frac{2q_2}{3}-\frac{q_1^3}{3}+\frac{q_1}{5})z +q_3.
$$

If we set $z=m \in \mathbb Z$, the difference Burchnall-Chaundy equation simply becomes  
a certain KdV-type reduction of the Hirota-Miwa equation, which is known to have the Laurent property {\cite{FZ}. 
This explains the Laurent property of the coefficients, 
but the proof of polynomiality of $Q_n(z)$ needs additional arguments and is related 
to specific properties of the Cauchy problem we consider.
We follow here the classical approach \cite{BCh}, \cite{AM} by using an explicit description of solutions 
in terms of Casorati determinants, which are standard in the theory of integrable systems. 
As a result we obtain an independent proof of the Laurent property for the  coefficients and a recursive way for their calculation.

We show also that these polynomials, after rescaling $R_n(z)=2^{-n(n+1)/2}Q_n(z)$,
satisfy the difference equation
\begin{equation}
\label{dBCh2}
R_{n+1}(z+1)R_{n-1}(z-1)-R_{n+1}(z-1)R_{n-1}(z+1)=R^2_n(z)
\end{equation}
with the same initial data $R_{-1}(z)=R_0(z)=1.$
This equation, which we call {\it difference Dodgson equation}, for $z=m \in \mathbb Z$ is a reduction of the 3D Dodgson octahedral equation, 
which is formally equivalent to the Hirota-Miwa equation but has different geometry.

We show that under a natural continuum limit the polynomials $Q_n(z)$ become the usual Burchnall-Chaundy polynomials $P_n(z),$ 
thus giving one more proof for their existence.

A corollary of our results is the proof that in terms of the initial data $c_n=P_n(0)$,  
$$A_n~\! P_n \in \mathbb Z [z; c_1^{\pm 1},\dots, c_{n-2}^{\pm 1}, c_{n-1}, c_{n}], \qquad  A_n=\prod_{j=1}^n(2j-1)!! ,$$
have coefficients which are Laurent polynomials in $c_k$ with integer coefficients: 
\begin{gather*}
P_1 = z + c_1 ,\,\,
P_2 = \frac{1}{3}\big(z^3 + 3 c_1 z^2 + 3 c_1^2 z + 3 c_2\big) ,\\
P_3 = \frac{1}{45}\big(z^6 + 6 c_{1} z^5 + 15 c_{1}^2 z^4 + 15 (c_{1}^3 + c_{2}) z^3 + 45 c_{1} c_{2} z^2 + 45 (\frac{c_{2}^2}{c_{1}} + \frac{c_{3}}{c_{1}}) z + 45 c_{3}\big) ,\\
\begin{split}
&P_4 = 
\frac{1}{4725}\big(z^{10} + 10 c_{1} z^9 + 45 c_{1}^2 z^8 + 15 (7 c_{1}^3 + 3 c_{2}) z^7 + 105 (c_{1}^4 + 3 c_{2} c_{1}) z^6 \\ &+ 315 (\frac{c_{2}^2}{c_{1}} + \frac{c_{3}}{c_{1}} + 2 c_{1}^2 c_{2}) z^5 + 1575 (c_{2}^2 + c_{3}) z^4 + 1575 (\frac{c_{2}^3}{c_{1}^2} + c_{1} c_{3} + \frac{c_{4}}{c_{2}} + \frac{c_{3}^2}{c_{1}^2 c_{2}} + 2 \frac{c_{3}c_{2}}{c_{1}^2} ) z^3\\ &\quad+ 4725 (\frac{c_{3}^2}{c_{1} c_{2}} + \frac{ c_{2} c_{3}}{c_{1}} + \frac{c_{1}c_{4}}{c_{2}}) z^2 + 4725 (\frac{ c_{3}^2}{c_{2}} + \frac{c_{1}^2 c_{4}}{c_{2}}) z + 4725 c_{4}\big) ,~\! \hdots
\end{split}
\end{gather*}
This form of the Burchnall-Chaundy polynomials seems to be new and demonstrates one more feature of these remarkable polynomials.

\section{Dodgson, Hirota-Miwa and Burchnall-Chaundy}

In 1866 Charles Dodgson (known to the world under the name of Lewis Carroll) proposed a very original method of computing determinants called ``condensation method" \cite{Dod}.
One can view this method as the solution of a very special Cauchy problem for the discrete {\it Dodgson octahedral equation}
\begin{equation}
\label{octa}
u_{l, m+1, n+1}u_{l,m-1,n-1}-u_{l, m+1, n-1}u_{l, m-1, n+1}=u_{l-1, m, n}u_{l+1, m, n}, 
\end{equation}
where $m,n,l \in \mathbb Z, \, m \equiv n \equiv l ({\rm mod} \, 2).$ Let $A$ be an $N\times N$ matrix for which the determinant is to be computed. Assume that $N$ is odd and consider the following Cauchy data: $u_{-1, m,n}=1,$ for all $m, n\in\mathbb{Z}$, and
$$u_{0,-N-1+2i,-N-1+2j}=A_{ij},$$ when $i,j=1,\dots N$ and $u_{0,-N-1+2i,-N-1+2j}=0$ otherwise. For even $N$ we should consider the same initial data for $l=0$ and $l=1$ respectively. The Dodgson condensation can be viewed as the evolution, according to \eqref{octa}, along the positive $l$-direction inside the characteristic cone: $\max(|m,|n|)\leq N-l-(N\!\!\mod2)$. The support of such a solution for a generic matrix will therefore be a pyramid with matrix $A$ in the base and the determinant $\det A$ at the top (see Fig.1). Outside this pyramid we can assume all $u_{l,m,n}$ for $l>0$ to be zero such that the equation is trivially satisfied.

\begin{figure}
\resizebox{\textwidth}{!}{\includegraphics[]{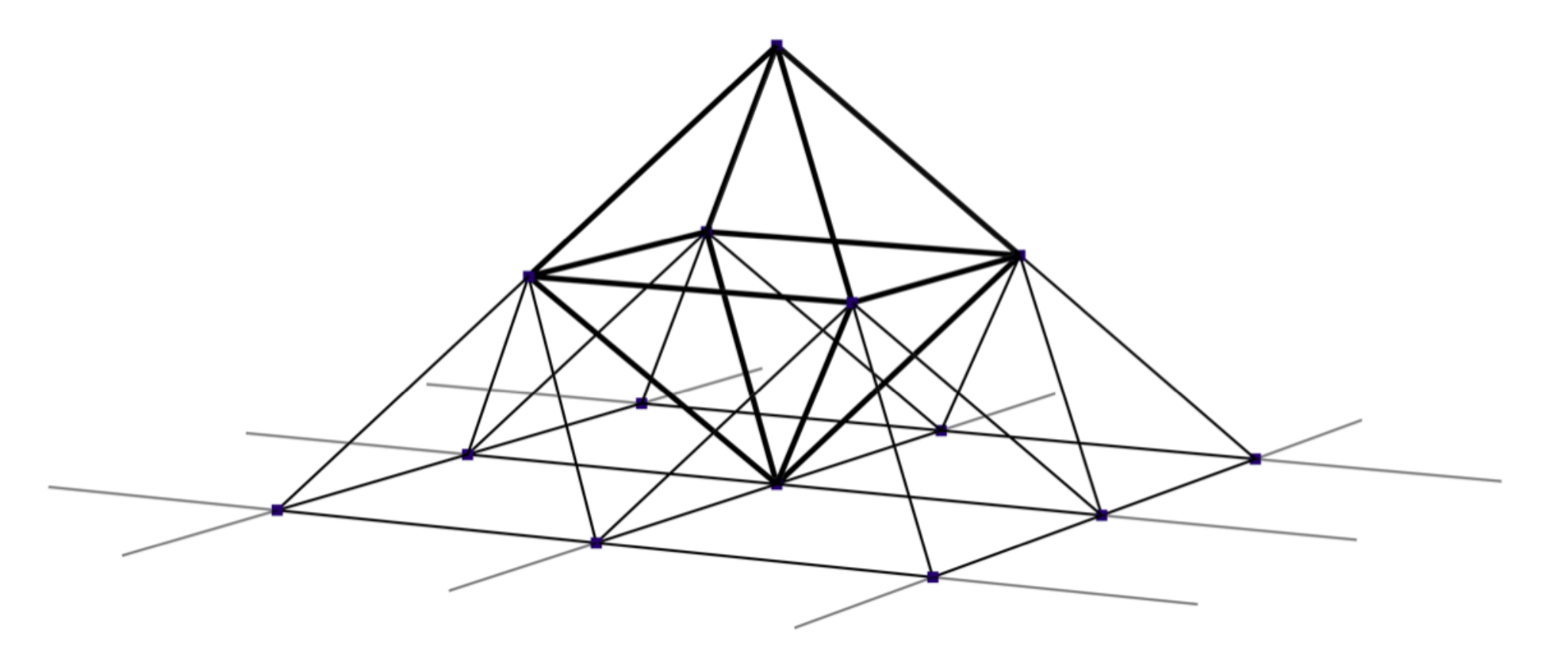}}
\caption{Dodgson's Cauchy pyramid for computing $3\times3$ determinants}
\end{figure}

In the 1980s Hirota \cite{Hirota} discovered a discrete version of the KP equation on a standard cubic lattice, which was subsequently studied by Miwa \cite{Miwa} in relation to Sato theory, and which is now known as the {\it Hirota-Miwa equation}
\begin{equation}
\label{HM}
a~\! v_{l+1, m, n}v_{l, m+1,n+1}+b~\! v_{l, m+1, n}v_{l+1, m, n+1}+c~\! v_{l, m, n+1}v_{l+1, m+1,n}=0, 
\end{equation}
where $l,m,n \in \mathbb Z$ and $a,b,c$ are arbitrary non-zero parameters. Note that the exact values of the parameters are not very important and can be changed by the gauge transformation
$$v_{l,m,n} = \Big(\frac{\tilde a}{a}\Big)^{mn}\Big(\frac{\tilde b}{b}\Big)^{ln}\Big(\frac{\tilde c}{c}\Big)^{lm}~\! {\tilde v}_{l,m,n},$$
which however will affect the Cauchy data.

Formally the  Hirota-Miwa equation may be considered as a version of the Dodgson equation if one interprets these six vertices of the cube 
as the vertices of the octahedron, see Figure \ref{cube}. 

\begin{figure}[h]
\resizebox{7cm}{!}{\includegraphics[]{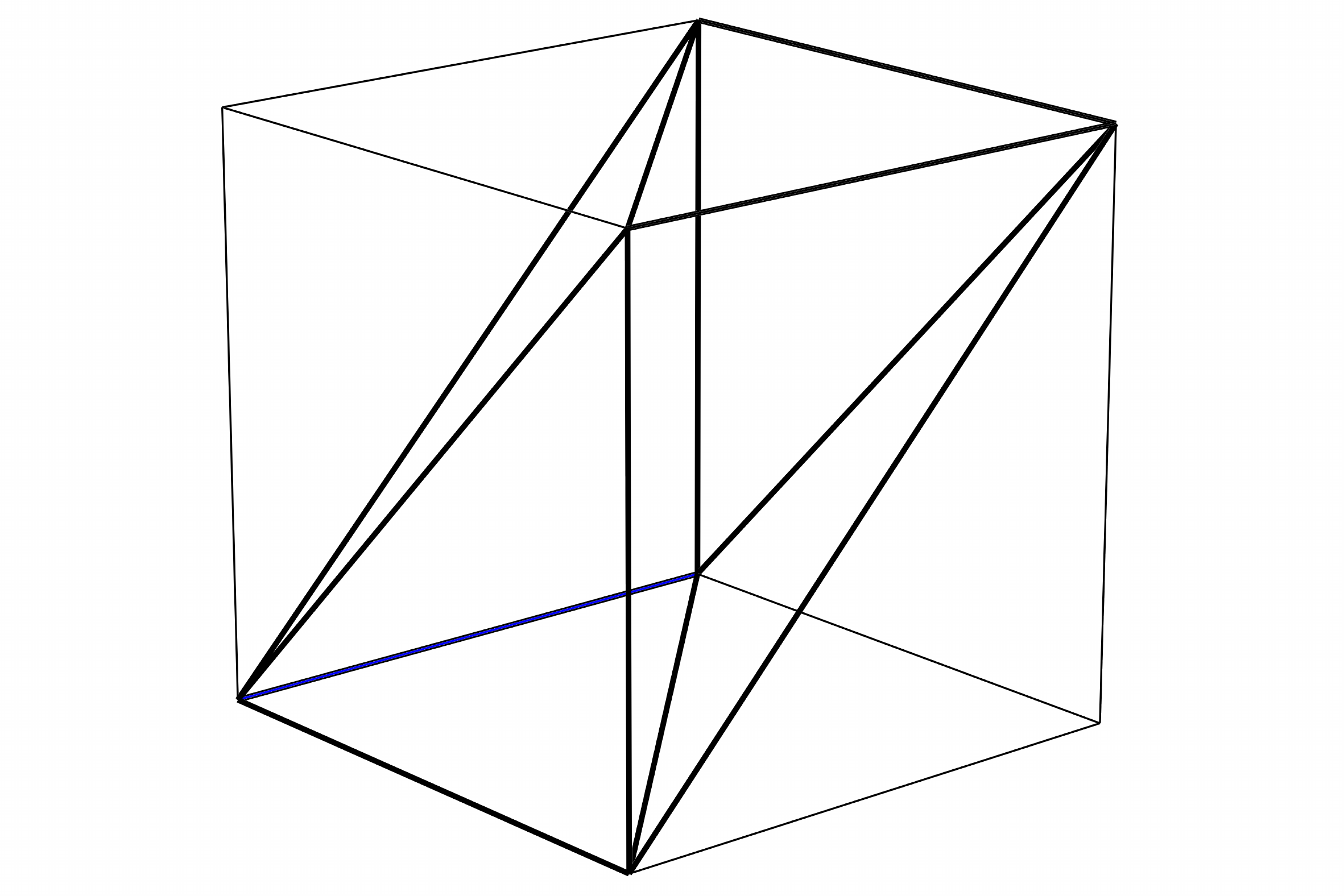}}
\caption{Support of the Hirota-Miwa equation in the cube}\label{cube}
\end{figure}

It is however clear that from a geometric point of view this is unnatural.
Indeed, the typical initial value problems for the Hirota-Miwa equation 
are very different from the Dodgson initial value problem. In particular, typical explicit solutions for the Hirota-Miwa equation are given by determinants of the same size, while for Dodgson it is crucial that the size of the matrices is shrinking when moving up in the pyramid.

This difference is clear if we compare two natural reductions of these equations. 
For the octahedral equation we consider the reduction $u_{l+1,  m,n}=u_{l-1, m, n}$ leading to the discrete Dodgson equation
\begin{equation}
\label{octa2d}
u_{m+1, n+1}u_{m-1,n-1}-u_{m+1, n-1}u_{m-1, n+1}=u^2_{m, n}, 
\end{equation}
where $m \equiv n \,({\rm mod} \, 2).$ If we extend this equation to all integers $m,n$ and replace $m$ by $z$ we come to the difference
Dodgson equation (\ref{dBCh2}).

For Hirota-Miwa a natural reduction is $v_{l+1, m, n+1}=v_{l, m,n}$ which, for $a=1, \, b=c=-1$, leads to the {\it discrete KdV equation} in Hirota bilinear form \cite{Hirota-dKdV}
\begin{equation}
\label{dKdV}
Q_{m+1, n+1}Q_{m, n-1}-Q_{m, n+1}Q_{m+1,n-1}=Q_{m,n}Q_{m+1,n},
\end{equation}
the functional version of which is the Burchnall-Chaundy equation (\ref{dBCh}). Note that the support of the equation has a domino shape, which is a cube squashed along one of the face diagonals, see Figure \ref{domino}.

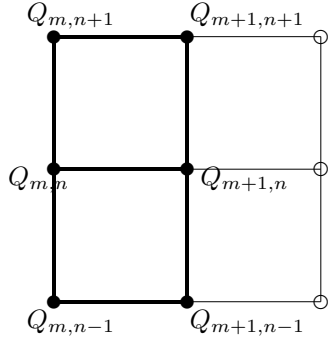
\begin{figure}[h]
\scalebox{1}{\begin{picture}(100,120)(0,0)
\linethickness{1.2pt}
\put(10,10){\line(0,1){100}}
\put(0,0){$Q_{m,n-1}$}\put(10,10){\circle*{5}}\put(10,10){\line(1,0){50}}\put(60,10){\circle*{5}}\put(61,0){$Q_{m+1,n-1}$}
\put(-7,54){$Q_{m,n}$}\put(10,60){\circle*{5}}\put(10,60){\line(1,0){50}}\put(60,60){\circle*{5}}\put(65,54){$Q_{m+1,n}$}
\put(0,116){$Q_{m,n+1}$}\put(10,110){\circle*{5}}\put(10,110){\line(1,0){50}}\put(60,110){\circle*{5}}\put(61,116){$Q_{m+1,n+1}$}
\put(60,10){\line(0,1){100}}
\linethickness{.1pt}
\put(60,10){\line(1,0){50}}
\put(60,60){\line(1,0){50}}
\put(60,110){\line(1,0){50}}
\put(110,10){\line(0,1){100}}
\put(110,10){\circle{5}}
\put(110,60){\circle{5}}
\put(110,110){\circle{5}}
\end{picture}}
\caption{Domino-type support for the discrete KdV equation.}\label{domino}
\end{figure}

Because of their common origin it is natural to expect that the equations (\ref{dBCh}) and (\ref{dBCh2}) are closely related, which is indeed the case.
\begin{thm}
The difference Burchnall-Chaundy (\ref{dBCh}) and Dodgson (\ref{dBCh2}) equations 
are equivalent on the set of initial data satisfying $\Phi_0=0,$ where
\begin{equation}
\label{phi}
\Phi_n(z):=Q_{n}(z+1)Q_{n-1}(z-1)+Q_{n-1}(z+1)Q_{n}(z-1)-2Q_{n-1}(z)Q_n(z).
\end{equation}
More precisely, if the initial data $ Q_{-1}(z),Q_0(z)$ of the Cauchy problem for the dBCh equation (\ref{dBCh}) 
satisfy the constraint
\begin{equation}
\label{constraint}
Q_{0}(z+1)Q_{-1}(z-1)+Q_{-1}(z+1)Q_{0}(z-1)-2Q_{-1}(z)Q_0(z)=0,
\end{equation}
then $R_n(z)=2^{-\frac{n(n+1}{2}}Q_n(z)$ satisfy the Dodgson relation (\ref{dBCh2}).
\end{thm}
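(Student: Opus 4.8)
The plan is to show that the constraint $\Phi_0 = 0$ propagates to $\Phi_n = 0$ for all $n$, and that when this holds the dBCh equation (\ref{dBCh}) is algebraically equivalent to the Dodgson equation (\ref{dBCh2}) after the rescaling $R_n = 2^{-n(n+1)/2}Q_n$. First I would record the purely algebraic identity relating the two bilinear forms. Writing out $R_{n+1}(z+1)R_{n-1}(z-1) - R_{n+1}(z-1)R_{n-1}(z+1)$ in terms of $Q$'s, the prefactors combine as $2^{-(n+2)(n+1)/2}\cdot 2^{-n(n-1)/2} = 2^{-(n^2+n+1)} = 2^{-1}\cdot 2^{-n(n+1)}$, while $R_n(z)^2$ carries the factor $2^{-n(n+1)}$. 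Hence the Dodgson relation for $R_n$ is equivalent to
\begin{equation*}
Q_{n+1}(z+1)Q_{n-1}(z-1) - Q_{n+1}(z-1)Q_{n-1}(z+1) = 2\,Q_n(z)^2 .
\end{equation*}
So the whole theorem reduces to: \emph{under the dBCh recurrence, the constraint $\Phi_0 = 0$ forces $Q_{n+1}(z+1)Q_{n-1}(z-1) - Q_{n+1}(z-1)Q_{n-1}(z+1) = 2Q_n(z)^2$ for all $n$.}

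Next I would set up the induction. The key step is to express the Dodgson-type combination in terms of the dBCh equation evaluated at two neighbouring points. Applying (\ref{dBCh}) at $z$ and at $z-1$:
\begin{align*}
Q_{n+1}(z+1)Q_{n-1}(z) - Q_{n+1}(z)Q_{n-1}(z+1) &= Q_n(z)Q_n(z+1),\\
Q_{n+1}(z)Q_{n-1}(z-1) - Q_{n+1}(z-1)Q_{n-1}(z) &= Q_n(z-1)Q_n(z).
\end{align*}
The idea is to eliminate the ``mixed'' terms $Q_{n+1}(z)Q_{n-1}(z\pm 1)$ by forming an appropriate linear combination so that the left-hand side becomes $Q_{n+1}(z+1)Q_{n-1}(z-1) - Q_{n+1}(z-1)Q_{n-1}(z+1)$ up to a multiple of $Q_{n-1}$ times $\Phi_n$. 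Concretely, one multiplies the first identity by $Q_{n-1}(z-1)$, the second by $Q_{n-1}(z+1)$, adds them, and uses the dBCh equation once more (now at index $n$, i.e. the relation for $Q_{n+1},Q_{n-1}$ replaced by $Q_n,Q_{n-2}$ is \emph{not} what is needed — instead the relevant identity is the one expressing $Q_{n+1}(z)$ via lower data). After collecting terms the combination $Q_{n+1}(z+1)Q_{n-1}(z-1) - Q_{n+1}(z-1)Q_{n-1}(z+1)$ should emerge as $Q_n(z)\big(Q_n(z+1)Q_{n-1}(z-1) + Q_n(z-1)Q_{n-1}(z+1)\big)/Q_{n-1}(z)$ plus corrections, and invoking $\Phi_n(z) = 0$ replaces the bracket by $2Q_{n-1}(z)Q_n(z)$, yielding exactly $2Q_n(z)^2$. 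So the engine of the proof is the propagation $\Phi_0 = 0 \Rightarrow \Phi_n = 0$, which I would prove by induction on $n$ using the dBCh recurrence to rewrite $\Phi_{n+1}$ in terms of $\Phi_n$ (and possibly $\Phi_{n-1}$), exploiting the factorized structure of (\ref{dBCh}).

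The main obstacle is precisely establishing that $\Phi_n \equiv 0$ is preserved by the recurrence. Because the dBCh equation determines $Q_{n+1}$ only rationally (up to the gauge freedom $Q_{n+1} \to Q_{n+1} + cQ_{n-1}$, which does not affect $\Phi$), one must check that the expression for $\Phi_{n+1}$ obtained after substituting the recurrence for $Q_{n+1}(z\pm 1)$ and $Q_{n+1}(z)$ factors through $\Phi_n$ — this is a bilinear identity in the $Q_k$'s that I expect to follow from a Plücker-type (three-term) relation, consistent with the determinantal (Casorati) representation of the solutions mentioned in the introduction. An alternative, cleaner route that sidesteps the direct manipulation: restrict to $z = m \in \mathbb{Z}$, where both equations are reductions of the octahedral / Hirota-Miwa equation; there $\Phi_n = 0$ is the statement that the $l+1 = l-1$ reduction is compatible with the $v_{l+1,m,n+1} = v_{l,m,n}$ reduction, which can be verified at the lattice level, and then extend from $z \in \mathbb{Z}$ to $z$ formal/generic by polynomiality of $Q_n(z)$ (Zariski density of the integers in the affine line over $\mathbb{Q}$). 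I would present the direct algebraic verification as the primary proof and remark on the lattice interpretation as the conceptual explanation.
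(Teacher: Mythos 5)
Your overall architecture is the same as the paper's, and two of its three ingredients are carried out correctly: the rescaling computation reducing (\ref{dBCh2}) for $R_n$ to $Q_{n+1}(z+1)Q_{n-1}(z-1)-Q_{n+1}(z-1)Q_{n-1}(z+1)=2Q_n(z)^2$, and the closing step (multiply the dBCh relation at $z$ by $Q_{n-1}(z-1)$, the one at $z-1$ by $Q_{n-1}(z+1)$, add, invoke $\Phi_n=0$) are exactly what the paper does; in fact that step comes out exactly, with no ``corrections'': the mixed terms $Q_{n+1}(z)Q_{n-1}(z\pm1)Q_{n-1}(z\mp1)$ cancel and the left-hand side factors through $Q_{n-1}(z)$. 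The genuine gap is the propagation $\Phi_n\equiv0\Rightarrow\Phi_{n+1}\equiv0$, which you yourself call the engine of the proof but never establish: you only ``expect'' it to follow from a Pl\"ucker-type identity, and neither fallback you offer works at the level of generality of the theorem. The Casoratian representation is not available here (in the paper it is derived later, and only for the special initial data $Q_{-1}=Q_0=1$, whereas the theorem concerns arbitrary initial data subject to (\ref{constraint})), and the lattice/Zariski-density route presupposes that $Q_n(z)$ is polynomial in $z$, which is not among the hypotheses; moreover the ``compatibility of the two reductions'' on the lattice is precisely the identity you would still need to verify, so nothing is gained.

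The missing step is, however, elementary and uses exactly the two relations you already wrote down: multiplying the dBCh relation at $z-1$ by $Q_n(z+1)$ and the one at $z$ by $Q_n(z-1)$ makes both right-hand sides equal to $Q_n(z-1)Q_n(z)Q_n(z+1)$, so equating the left-hand sides gives
\begin{multline*}
Q_{n+1}(z+1)Q_n(z-1)Q_{n-1}(z)+Q_{n+1}(z-1)Q_n(z+1)Q_{n-1}(z)\\
=Q_{n+1}(z)\bigl(Q_n(z+1)Q_{n-1}(z-1)+Q_n(z-1)Q_{n-1}(z+1)\bigr),
\end{multline*}
which is precisely the identity $Q_{n-1}(z)\,\Phi_{n+1}(z)=Q_{n+1}(z)\,\Phi_n(z)$. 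Hence $\Phi_n\equiv0$ forces $\Phi_{n+1}\equiv0$ (for $Q_{n-1}\not\equiv0$), and induction starting from the hypothesis $\Phi_0=0$ closes the argument; this is how the paper proves it. Without this identity, or some equivalent of it, your induction has no step mechanism, so as written the proposal is incomplete at its central point.
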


\begin{proof}
If we modify the Dodgson equation as
\begin{equation}
\label{dBCh3}
R_{n+1}(z+1)R_{n-1}(z-1)-R_{n+1}(z-1)R_{n-1}(z+1)=2R^2_n(z),
\end{equation}
then we simply have to show that under the constraint $\Phi_0=0$ we have $Q_n(z)=R_n(z).$ 
Note that the constraint is preserved by the Burchnall-Chaundy dynamics. Indeed, assume that
\begin{equation}
\label{phi0}
\Phi_n(z)=Q_{n}(z+1)Q_{n-1}(z-1)+Q_{n-1}(z+1)Q_{n}(z-1)-2Q_{n-1}(z)Q_n(z)=0
\end{equation}
and we have two dBCh relations
\begin{equation}
\label{eq1}
Q_{n+1}(z)Q_{n-1}(z-1)-Q_{n+1}(z-1)Q_{n-1}(z)=Q_n(z-1)Q_n(z),
\end{equation}
\begin{equation}
\label{eq2}
Q_{n+1}(z+1)Q_{n-1}(z)-Q_{n+1}(z)Q_{n-1}(z+1)=Q_n(z)Q_n(z+1).
\end{equation}
Multiplying (\ref{eq1}) by $Q_n(z+1)$ and (\ref{eq2}) by $Q_n(z-1)$ we have
\begin{gather*}
\begin{split}
Q_{n+1}(z+1)Q_n(z-1)Q_{n-1}(z)-Q_{n+1}(z)Q_n(z-1)&Q_{n-1}(z+1)\\
&=Q_n(z-1)Q_n(z)Q_n(z+1),
\end{split}\\
\begin{split}
Q_{n+1}(z)Q_n(z+1)Q_{n-1}(z-1)-Q_{n+1}(z-1)&Q_n(z+1)Q_{n-1}(z)\\
&=Q_n(z-1)Q_n(z)Q_n(z+1),
\end{split}
\end{gather*}
so
\begin{gather}
Q_{n+1}(z+1)Q_n(z-1)Q_{n-1}(z)-Q_{n+1}(z)Q_n(z-1)Q_{n-1}(z+1)\nonumber
\\
=Q_{n+1}(z)Q_n(z+1)Q_{n-1}(z-1)-Q_{n+1}(z-1)Q_n(z+1)Q_{n-1}(z).
\label{con}
\end{gather}
On the other hand $Q_{n-1}(z)\Phi_{n+1}(z)-Q_{n+1}(z)\Phi_{n}(z)$ can be rewritten as
\begin{multline*}
Q_{n-1}(z)Q_{n+1}(z+1)Q_{n}(z-1)
+Q_{n-1}(z)Q_{n}(z+1)Q_{n+1}(z-1)\\
-Q_{n+1}(z)Q_{n}(z+1)Q_{n-1}(z-1)
-Q_{n+1}(z)Q_{n-1}(z+1)Q_{n}(z-1),
\end{multline*}
which is zero due to (\ref{con}).

Now in order to prove that the $R_n(z)=Q_n(z)$ satisfy (\ref{dBCh3}), we multiply relation (\ref{phi0}) by $Q_n(z)$, (\ref{eq1}) by $Q_{n-1}(z+1)$, (\ref{eq2}) by $Q_{n-1}(z-1)$ and add all of them to obtain
$$
Q_{n-1}(z)Q_{n+1}(z+1)Q_{n-1}(z-1)-Q_{n-1}(z)Q_{n+1}(z-1)Q_{n-1}(z+1)=2Q_{n-1}(z)Q^2_n(z).
$$
Dividing now by $Q_{n-1}(z)$ proves the claim.
\end{proof}
This theorem can be reformulated as follows. Consider a two by two square formed by two adjacent KdV dominos (see Figure \ref{domino}). It can be alternatively described as two horizontal dominos with relations $\Phi_n = 0$ and $\Phi_{n+1} = 0$ on them. Then the claim is that any three of these domino relations implies the fourth one and the Dodgson relation \eqref{dBCh3}.

\section{Cauchy problem for discrete equations and Laurent property}

Fomin and Zelevinsky \cite{FZ} have shown that the Hirota-Miwa and the octahedral Dodgson equations have the Laurent property 
for suitable initial data. They also considered some reductions, including the discrete KdV equation 
\begin{equation}
\alpha Q_{m+1, n+1}Q_{m, n-1} + \beta Q_{m, n+1}Q_{m+1,n-1}=Q_{m,n}Q_{m+1,n}\label{knight}\end{equation}
(which they called "the knight recurrence" and attributed to Noam Elkies).

In the case of a particular Cauchy problem for this equation with $\alpha=1, \beta=-1$ and
$$Q_{m,-1}=Q_{m,0}=1, \quad Q_{0,n}=q_n, \quad m,n \in \mathbb Z, $$
this implies that the general solution $Q_{m,n}$ is a Laurent polynomial in $q_i$ with integer coefficients.
In particular, this implies that when all $q_i=1$, all the $Q_{m,n}$ are integers as can be seen in Figure \ref{table}.
\begin{figure}[h]
\setcounter{MaxMatrixCols}{17}
\begin{equation*}
\begin{matrix}
\scriptstyle 12181&\scriptstyle  -507&\scriptstyle  -455&\scriptstyle  -91&\scriptstyle  21&\scriptstyle  5&\scriptstyle  1&\scriptstyle  21&\scriptstyle  397&\scriptstyle  6469&\scriptstyle  104145&\scriptstyle  1332565&\scriptstyle  15181325\\\scriptstyle   377&\scriptstyle  13&\scriptstyle  13&\scriptstyle  21&\scriptstyle  9&\scriptstyle  -3&\scriptstyle  1&\scriptstyle  13&\scriptstyle  149&\scriptstyle  1629&\scriptstyle  14001&\scriptstyle  115245&\scriptstyle  908245\\\scriptstyle   615&\scriptstyle  -26&\scriptstyle  -23&\scriptstyle  -4&\scriptstyle  3&\scriptstyle  2&\scriptstyle  1&\scriptstyle  8&\scriptstyle  59&\scriptstyle  350&\scriptstyle  2109&\scriptstyle  11492&\scriptstyle  52375\\\scriptstyle   249&\scriptstyle  51&\scriptstyle  5&\scriptstyle  1&\scriptstyle  1&\scriptstyle  -1&\scriptstyle  1&\scriptstyle  5&\scriptstyle  21&\scriptstyle  91&\scriptstyle  329&\scriptstyle  977&\scriptstyle  2477\\\scriptstyle   -39&\scriptstyle  -19&\scriptstyle  -7&\scriptstyle  -1&\scriptstyle  1&\scriptstyle  1&\scriptstyle  1&\scriptstyle  3&\scriptstyle  9&\scriptstyle  21&\scriptstyle  41&\scriptstyle  71&\scriptstyle  113\\\scriptstyle   -5&\scriptstyle  -4&\scriptstyle  -3&\scriptstyle  -2&\scriptstyle  -1&\scriptstyle  0&\scriptstyle  1&\scriptstyle  2&\scriptstyle  3&\scriptstyle  4&\scriptstyle  5&\scriptstyle  6&\scriptstyle  7\\\scriptstyle   1&\scriptstyle  1&\scriptstyle  1&\scriptstyle  1&\scriptstyle  1&\scriptstyle  1&\scriptstyle  1&\scriptstyle  1&\scriptstyle  1&\scriptstyle  1&\scriptstyle  1&\scriptstyle  1&\scriptstyle  1\\\scriptstyle   1&\scriptstyle  1&\scriptstyle  1&\scriptstyle  1&\scriptstyle  1&\scriptstyle  1&\scriptstyle  1&\scriptstyle  1&\scriptstyle  1&\scriptstyle  1&\scriptstyle  1&\scriptstyle  1&\scriptstyle  1\\\scriptstyle   7&\scriptstyle  6&\scriptstyle  5&\scriptstyle  4&\scriptstyle  3&\scriptstyle  2&\scriptstyle  1&\scriptstyle  0&\scriptstyle  -1&\scriptstyle  -2&\scriptstyle  -3&\scriptstyle  -4&\scriptstyle  -5\\\scriptstyle   113&\scriptstyle  71&\scriptstyle  41&\scriptstyle  21&\scriptstyle  9&\scriptstyle  3&\scriptstyle  1&\scriptstyle  1&\scriptstyle  1&\scriptstyle  -1&\scriptstyle  -7&\scriptstyle  -19&\scriptstyle  -39\\\scriptstyle   2477&\scriptstyle  977&\scriptstyle  329&\scriptstyle  91&\scriptstyle  21&\scriptstyle  5&\scriptstyle  1&\scriptstyle  -1&\scriptstyle  1&\scriptstyle  1&\scriptstyle  5&\scriptstyle  51&\scriptstyle  249\\\scriptstyle   52375&\scriptstyle  11492&\scriptstyle  2109&\scriptstyle  350&\scriptstyle  59&\scriptstyle  8&\scriptstyle  1&\scriptstyle  2&\scriptstyle  3&\scriptstyle  -4&\scriptstyle  -23&\scriptstyle  -26&\scriptstyle  615\\\scriptstyle   908245&\scriptstyle  115245&\scriptstyle  14001&\scriptstyle  1629&\scriptstyle  149&\scriptstyle  13&\scriptstyle  1&\scriptstyle  -3&\scriptstyle  9&\scriptstyle  21&\scriptstyle  13&\scriptstyle  13&\scriptstyle  377\\\scriptstyle   15181325&\scriptstyle  1332565&\scriptstyle  104145&\scriptstyle  6469&\scriptstyle  397&\scriptstyle  21&\scriptstyle  1&\scriptstyle  5&\scriptstyle  21&\scriptstyle  -91&\scriptstyle  -455&\scriptstyle  -507&\scriptstyle  12181
\end{matrix}
\end{equation*}
\caption{Solution to the discrete KdV equation with Cauchy data $Q_{0,n}=Q_{m,-1}=Q_{m,0} = 1$. The axes are given by the column of 1's ($n$ axis) and the top row of 1's ($m$ axis).}\label{table}
\end{figure}
Note that on the line $m=1$ in this figure we have the Fibonacci numbers $Q_{1,n}=F_n$, 
$$
F_{n+1}=F_{n}+F_{n-1}, \quad F_{-1}=F_{0}=1,
$$
which grow like $\varphi^n$, where $\varphi=\frac{1+\sqrt{5}}{2}$ is the 
golden ratio. On the next line we have the sequence $Q_{2,n}=G_n$ satisfying the linear recurrence,
$$
F_{n-1}G_{n+1}=F_{n+1}G_{n-1}+F_{n}G_{n}, \quad G_{-1}=G_{0}=1,
$$
with Fibonacci numbers as the coefficients, which grows like $\varphi^{2n}.$ Similarly, $|Q_{m,n}|$ for fixed $m$ grows exponentially like $\varphi^{mn}$ as $n \to \infty.$ 

In the $m$ direction the growth is much slower: for fixed $n$ the $|Q_{m,n}|$ grow like $m^{\frac{n(n+1)}{2}}.$
This indicates that the dependence on $m$ is polynomial, but to prove this for arbitrary initial data $q_n$ seems to be not easy.

Instead we prefer to use (like Burchnall and Chaundy \cite{BCh} and Adler and Moser\cite{AM}) alternative tools, standard in the theory of integrable systems, by presenting explicit determinantal expressions for the solutions. The Wronskians from \cite{AM, BCh} are naturally replaced by Casoratians in our case, which is usual in the discrete setting, see e.g. \cite{Ohta-Casorati}. Note however that in soliton KP theory, the matrices that yield determinant solutions usually have the same size, which for example in the case of the discrete KdV equation \eqref{knight} would require the sum of the coefficients $\alpha$ and $\beta$ to be equal to 1. This is not true in our case \eqref{dKdV}, for which it is known  \cite{AM,BCh} that it is essential to have matrices of growing size.

\section{Casorati determinantal formulae}

Let $Q_n(z)$ be a solution of the Cauchy problem for the difference Burchnall-Chaundy equation (\ref{dBCh})
with Cauchy data $Q_{-1}(z)=Q_0(z)=1$ and arbitrary values for $Q_n(0).$
Since the Cauchy data $Q_{-1}(z)=Q_0(z)=1$ satisfy the constraint (\ref{constraint}) 
the corresponding polynomials 
$R_n(z)$ that satisfy \eqref{dBCh2} are simply
$$R_n(z)=2^{-\frac{n(n+1}{2}}Q_n(z).$$

The following procedure is a natural difference analogue of the Adler and Moser description of Burchnall-Chaundy polynomials \cite{AM}.
Define first the sequence of polynomials $x_n(z)$  by
\begin{equation}
\label{yj}
\Delta x_n(z)=x_{n-1}(z),\quad x_0=1,
\end{equation}
where 
$$\Delta f(z):=f(z+1)-f(z).$$
They depend on some additional parameters, which are not unique. 
The generating function 
$$F(z,u):=\sum_{k=0}^{\infty}x_k(z)u^k ,$$ satisfies the 
relation
$$\Delta F=u F$$ and has the general form
\begin{align*}
F(z,u)= A(t)(1+u)^z = A(t) e^{\sum_{k=1}^{\infty}(-1)^{k+1} \frac{u^k}{k} z},
\end{align*}
where $A(t)$ is an arbitrary function of the parameters.
We fix the choice of these parameters, which will turn out to be related to the KdV flows, by choosing 
$$A(t)=e^{\sum_{k=1}^{\infty}(-1)^{k+1}t_k\frac{u^k}{k}},$$
so that 
\begin{equation}
\label{F}
F(z,t,u)=e^{\sum_{k=1}^{\infty}(-1)^{k+1}(z+t_k)\frac{u^k}{k}},
\end{equation}
leading to
$$x_0=1,\quad x_1=z+t_1,\quad x_2=\frac{1}{2}[(z+t_1)^2-(z+t_2)],\quad\dots.$$
These polynomials can be given also as the determinants (see e.g. \cite{Mac}, page 28)
\begin{equation}
\label{det1}
x_k(z)= \frac{1}{k!}
\begin{vmatrix}
z_1 & -1 & 0 & \hdots & \hdots & 0 \\
z_2 & z_1 & -2 & \hdots & \hdots & 0 \\
z_3 & z_2 & z_1 & -3 & \hdots & 0 \\
\vdots & \vdots & \vdots & \ddots & \ddots & 0 \\
z_{k-1} &  z_{k-2} & \hdots & \hdots & z_1 & 1-k \\
z_k & z_{k-1} & \hdots & \hdots & z_2 & z_1
\end{vmatrix} ,\quad \text{~} z_k = (-1)^{k+1} (z+t_k) .
\end{equation}

Let now $y_k=x_{2k-1}$ and consider the  
Casoratians $Q_n(z)=C(y_1,\dots y_n)$, where the {\it Casoratian}
$C(f_1,\dots f_n)$ of the functions $f_1(z), \dots f_n(z)$ is defined as 
\begin{equation}
\label{CW}
C(f_1,\dots f_n)=\det ||f_i(z+j-1)||, \,\, i,j=1,\dots,n.
\end{equation}
The Casoratians 
\begin{equation}
\label{Cas}
Q_k=C(y_1,\dots y_k)
\end{equation}
can be written as the determinants
\begin{equation}
\label{det3}
Q_k=
\begin{vmatrix}
x_1 & x_3 & x_5 & \hdots & \hdots & x_{2k-1} \\
1 & x_2 & x_4 & \hdots & \hdots & x_{2k-2} \\
0 & x_1 & x_3 & x_5 & \hdots & x_{2k-3} \\
0 & 1 & x_2 & x_4 & \hdots & x_{2k-4} \\
\vdots & \vdots & \vdots & \ddots & \ddots & \vdots \\
0 &  0 & \hdots & \hdots & x_{k-2} & x_k
\end{vmatrix} 
\end{equation}

\begin{thm}
The Casoratians $Q_k(z)$ given by (\ref{det3}) with $x_j(z)$ defined by (\ref{det1}) satisfy the Burchnall-Chaundy relations (\ref{dBCh}).
\end{thm}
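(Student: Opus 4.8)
The plan is to verify the bilinear identity \eqref{dBCh} directly from the determinantal formula \eqref{det3}, using the classical Jacobi/Plücker machinery for Casoratians. The key structural observation is that, because of the relation $\Delta x_n = x_{n-1}$ in \eqref{yj}, shifting $z\mapsto z+1$ in the determinant \eqref{det3} does not produce an unrelated object: each column becomes a combination of itself and neighbouring columns. Concretely I would first record that $y_k(z+1)=y_k(z)+y_{k-1}^{(1)}$ for a suitable lower "descendant'' (in fact $\Delta y_k = \Delta x_{2k-1} = x_{2k-2}$, which is \emph{not} itself a $y_j$, so one must work with the full family $x_j$, not just the odd-indexed ones). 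So the first step is to set up the $2n\times 2n$-style enlarged matrix whose minors simultaneously capture $Q_{n-1}(z),Q_n(z),Q_{n+1}(z)$ and their $z\mapsto z+1$ shifts; the determinant \eqref{det3} is already written in a staggered form that makes this natural, since $Q_k$ is the Casoratian $C(y_1,\dots,y_k)$ with the $x_j$'s filling in the "interlaced'' rows.

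The second step is the algebraic heart: apply the Jacobi identity for the minors of a single matrix $M$. If $M$ is the $(n+1)\times(n+1)$ matrix whose Casoratian is $Q_{n+1}(z)$ (appropriately shifted so that deleting the first or last row/column yields $Q_n$ and $Q_{n-1}$ with the correct arguments), then the Desnanot–Jacobi identity
$$ M\;M^{1,n+1}_{1,n+1} \;=\; M^{1}_{1}\,M^{n+1}_{n+1} \;-\; M^{1}_{n+1}\,M^{n+1}_{1} $$
gives precisely a relation of the shape \eqref{dBCh} once each of the five minors is identified with $Q_{n\pm1}(z)$, $Q_n(z)$ evaluated at $z$ or $z+1$. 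The identification of the corner minors $M^1_{n+1}$ and $M^{n+1}_1$ with $Q_n(z)$ and $Q_n(z+1)$ (rather than something larger or smaller) is where the specific staggered structure of \eqref{det3}, together with $\Delta x_j=x_{j-1}$, must be used: deleting a boundary row and the opposite boundary column, followed by column operations that replace $x_j(z+1)$ by $x_j(z)+x_{j-1}(z)$, must collapse back to the Casoratian of one fewer function. I would verify this by an explicit column-reduction argument, proceeding from the rightmost column leftwards, exactly as Adler–Moser do in the Wronskian case.

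The third step is bookkeeping: check that the scalar normalizations match (the Casoratian \eqref{det3} with $x_0=1$ carries no spurious constant), and that the base cases $Q_{-1}=Q_0=1$, $Q_1=x_1=z+t_1$ are consistent with the claimed recurrence, so that this Casoratian family is \emph{the} solution of the Cauchy problem of Section~4. Finally I should note that, since the Cauchy data $Q_{-1}=Q_0=1$ satisfy the constraint \eqref{constraint}, the same determinants give solutions of the Dodgson equation \eqref{dBCh2} after the rescaling $R_n=2^{-n(n+1)/2}Q_n$, by the first theorem — so no separate computation is needed there.

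The main obstacle I anticipate is \textbf{step two, the precise matching of the corner minors}: naively applying Jacobi to the Casoratian matrix produces shifts in $z$ that are spread across several rows in an inhomogeneous way, and reconciling these with the single-shift pattern $z,z+1$ appearing in \eqref{dBCh} requires careful use of the ladder relation $\Delta x_j = x_{j-1}$ to "telescope'' the rows. Getting the interlacing of the two shifted copies right — i.e. choosing the enlarged matrix $M$ so that its four corner sub-Casoratians are exactly $Q_n(z),Q_n(z+1),Q_{n-1}(z),Q_{n-1}(z+1)$ — is the delicate combinatorial point; everything else is either the standard Desnanot–Jacobi identity or routine verification of base cases.
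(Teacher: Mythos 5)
Your overall strategy (a Desnanot--Jacobi type determinant identity combined with the ladder relation $\Delta x_j=x_{j-1}$) is the right family of tools, and the paper's proof is indeed of this kind, but your step two contains a genuine gap, exactly at the point you flag as delicate. If $M$ is an $(n+1)\times(n+1)$ matrix with $\det M=Q_{n+1}$, then the Desnanot--Jacobi identity has the shape (size $n+1$)$\cdot$(size $n-1$) $=$ (size $n$)$\cdot$(size $n$) $-$ (size $n$)$\cdot$(size $n$), i.e. exactly one product of a big and a small minor; but \eqref{dBCh} has the shape $Q_{n+1}Q_{n-1}-Q_{n+1}Q_{n-1}=Q_nQ_n$, with two such products. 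So no identification of the five minors of a single size-$(n+1)$ Casoratian-type matrix can reproduce \eqref{dBCh} unless some $n\times n$ minor is itself equal to a determinant of a different size. Moreover, the minors of the Casoratian matrix of $y_1,\dots,y_{n+1}$ obtained by deleting the first row are Casoratians $C(y_2,\dots,y_{n+1})$, which are not members of the family $Q_k$ at all, and the column operations you propose (telescoping $x_j(z+1)=x_j(z)+x_{j-1}(z)$) only rearrange $z$-shifts; they cure neither problem.

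The missing idea is the one Adler and Moser use and the paper adapts: augment the family by an \emph{arbitrary} function $\chi$ and establish the operator (``Jacobi'') identity $[C_k(\chi),C_{k+1}]=-(TC_k)\,C_{k+1}(\chi)$ for $C_k(\chi)=C(y_1,\dots,y_k,\chi)$ --- itself a Jacobi-type determinant identity, but on the augmented $(k+2)\times(k+2)$ matrix, proved by comparing kernels of difference operators of order $k+1$ rather than by matching corner minors by hand. Then specialize $\chi=1$: since $\Delta y_1=1$ and $\Delta^2y_k=y_{k-1}$, two successive reductions give $C_k(1)=(-1)^kC(\Delta y_1,\dots,\Delta y_k)=(-1)^kC_{k-1}$, so a size-$(k+1)$ determinant collapses to a size-$(k-1)$ one. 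It is this drop of the size by two --- not any rearrangement of the columns of a fixed matrix --- that realigns the sizes and turns the Jacobi identity into \eqref{dBCh}. Without introducing the extra column $\chi$ (equivalently, the constant function $1$) and exploiting $\Delta^2y_k=y_{k-1}$ through it, your plan stalls precisely at the identification of the corner minors, which is the heart of the proof rather than routine bookkeeping.
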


\begin{proof}
We essentially follow the arguments of Adler and Moser \cite{AM}.

Let $\phi_1,\dots, \phi_k$ be arbitrary functions of $z$ and consider
the difference operator 
$$C_k(\chi)=C(\phi_1, \dots, \phi_k, \chi).$$
Then we have what Adler and Moser call {\it Jacobi identity}:
$$[C_k(\chi), C_{k+1}]=-(TC_k) \,\, C_{k+1}(\chi),$$
where 
$$
C_k=C(\phi_1, \dots, \phi_k),
$$ 
$T$ is the shift operator:
$$T f(z)=f(z+1),$$
and by $[A,B]=C(A,B)$ we mean the Casoratian of $A$ and $B:$
$$[A,B]=A\,\, (TB) - B\,\, (TA).$$
The proof is simple: both $[C_k(\chi), C_{k+1}]$ and $C_{k+1}(\chi)$ are difference operators 
of order $k+1$ with the same kernel generated by $\phi_1, \dots, \phi_{k+1}$.
Thus they can only differ by a factor, which is easily seen to be $-(TC_k)$.

In our case we have $\Delta^2 \phi_k=\phi_{k-1}$ and $\phi_1=z.$
It is then easy to see that this implies that if we take $\chi=1$ then
$$C_k(1)=(-1)^kC(\Delta\phi_1, \dots, \Delta\phi_k).$$
However,
$$C(\Delta\phi_1, \dots, \Delta\phi_k)=C(\Delta^2\phi_2, \dots, \Delta^2\phi_k)=C(\phi_1, \dots, \phi_{k-1})=C_{k-1},$$
and similarly: $C_{k+1}(1)=(-1)^{k+1}C_k$. Substituting this into the Jacobi identity
we have
$$
C_{k-1}\,\, (TC_{k+1})-C_{k+1} \,\,(TC_{k-1}) = C_k \,\,(TC_k),
$$
which means that $Q_n(z)=C_n$ satisfies (\ref{dBCh}).
\end{proof}

\begin{corollary}
The coefficients of the Burchnall-Chaundy polynomials $Q_k(z)$ are polynomial in the parameters $t_j$.
\end{corollary}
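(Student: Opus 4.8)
The plan is to trace polynomiality through the two determinantal constructions in turn, so that the statement becomes essentially a bookkeeping consequence of what has already been set up. First I would verify that each auxiliary polynomial $x_k(z)$ lies in $\Q[z,t_1,\dots,t_k]$. This is immediate from the generating function (\ref{F}): expanding $F(z,t,u)=\exp\!\big(\sum_{j\ge 1}(-1)^{j+1}(z+t_j)\tfrac{u^j}{j}\big)$ as a power series in $u$, the coefficient of $u^k$ is a polynomial with rational coefficients in the quantities $z+t_1,\dots,z+t_k$ (only the monomials $u^1,\dots,u^k$ of the exponent can contribute in degree $k$), hence a polynomial in $z$ and $t_1,\dots,t_k$. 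The same conclusion follows from the explicit determinant (\ref{det1}), whose entries $z_j=(-1)^{j+1}(z+t_j)$ are affine-linear in $z$ and $t_j$, so that the determinant is a polynomial in them.

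Next I would invoke the determinant (\ref{det3}). By construction $Q_k(z)=C(y_1,\dots,y_k)$ with $y_j=x_{2j-1}$, and the relation $\Delta x_n=x_{n-1}$ from (\ref{yj}), equivalently $x_n(z+1)=x_n(z)+x_{n-1}(z)$, has already been used to rewrite the Casoratian so that all the shifted arguments $z+1,\dots,z+k-1$ are eliminated; the entries of (\ref{det3}) are drawn only from $x_1(z),x_2(z),\dots,x_{2k-1}(z)$ together with the constants $0$ and $1$. Since a determinant is a polynomial (with integer coefficients) in its entries, the previous step gives $Q_k(z)\in\Q[z,t_1,\dots,t_{2k-1}]$. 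Regarding $Q_k(z)$ as a polynomial in the single variable $z$ with coefficients in $\Q[t_1,\dots,t_{2k-1}]$, each such coefficient is then a polynomial in the parameters $t_j$, which is the claim.

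I do not anticipate any genuine difficulty here: the only point deserving a line of comment is that passing from the Casoratian $\det\|y_i(z+j-1)\|$ to the reduced form (\ref{det3}) leaves no shifted argument behind, which is exactly the effect of the column reduction based on $\Delta x_n=x_{n-1}$ used to derive (\ref{det3}); and in any case the unreduced Casoratian $\det\|x_{2i-1}(z+j-1)\|$ is already manifestly a polynomial in $z$ and $t_1,\dots,t_{2k-1}$, since every entry is. Tracking the precise denominators that appear — which matters for the sharper Laurent statement in terms of the initial data $q_j=Q_j(0)$ announced in the introduction — is not needed for this corollary, where only membership in a polynomial ring over $\Q$ is asserted.
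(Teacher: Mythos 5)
Your proof is correct and is essentially the paper's intended argument: the corollary is stated as an immediate consequence of the determinantal formulae, since every entry of (\ref{det3}) is one of the $x_m(z)$, which by (\ref{F}) or (\ref{det1}) lie in $\Q[z,t_1,\dots,t_m]$, so the determinant is polynomial in $z$ and the $t_j$. The paper gives no further detail, and your remark that even the unreduced Casoratian is manifestly polynomial is a harmless extra safeguard.
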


Here are the first few polynomials expressed in KdV parameters:
\begin{gather*}
Q_1 = z + t_1 ,\qquad Q_2 = \frac{1}{3} \big( z(z^2-1) + 3 t_1 z^2 + 3 t_1^2 z + t_1^3-t_3 \big)\\
\begin{split}
Q_3 = \frac{1}{45} \big( z^2 (z^2-1) (z^2-4) + 6 t_1 z^5 + 15 t_1^2 z^4 + (20 t_1^3 - 5 t_3 -15 t_1) z^3 +15 t_1 (t_1^3 - t_1 - t_3) z^2
 \\ + (9 t_1 - 10 t_3 + 9 t_5 - 15 t_1^2 t_3 - 5 t_1^3 + 6 t_1^5) z + t_1^6 - 5 t_3^2 - 5 t_1^3 t_3 + 9 t_1 t_5\big).
\end{split}
\end{gather*}

Now we need to find the relation between the KdV parameters $t_j$ and 
the Cauchy data $q_k=Q_k(0).$

Substituting $z=0$ in formula (\ref{det1}) we have the homogeneous polynomials
$$x_k=\frac{(-1)^{k+1}}{k}t_k+\phi_k(t_1,\dots, t_{k-1}),$$ where the weight of $x_k$ and $t_k$ is $k$:
$$x_1=t_1, \,\,\, x_2=-\frac{1}{2}t_2 + \frac{1}{2}t_1^2, \,\,\, x_3=\frac{1}{3}t_3- \frac{1}{2}t_1t_2 + \frac{1}{6}t_1^3,$$
$$x_4=-\frac{1}{4}t_4+\frac{1}{3}t_1t_3+\frac{1}{8}t_2^2-\frac{1}{4}t_1^2t_2+\frac{1}{24}t_1^4,$$
$$x_5=\frac{1}{5}t_5-\frac{1}{4}t_1t_4-\frac{1}{6}t_2t_3+\frac{1}{6}t_1^2t_3+\frac{1}{8}t_1t_2^2-\frac{1}{12}t_1^3t_2+\frac{1}{120}t_1^5,...$$

These relations can be inverted as the determinants (see \cite{Mac}, page 28)
\begin{equation}
\label{det2}
t_k=
\begin{vmatrix}
x_1 & 1 & 0 & \hdots & \hdots & 0 \\
2x_2 & x_1 & 1 & \hdots & \hdots & 0 \\
3x_3 & x_2 & x_1 & 1 & \hdots & 0 \\
\vdots & \vdots & \vdots & \ddots & \ddots & 0 \\
. &  . & \hdots & \hdots & x_1 & 1\\ 
kx_k & x_{k-1} & \hdots & \hdots & x_2 & x_1
\end{vmatrix} .
\end{equation}
Note that $t_k \in \mathbb Z[x_1, \dots, x_k]$ are polynomials with integer coefficients.

Substituting $z=0$ to (\ref{Cas}) we have the relations
$$
q_1=t_1, \,\,\,  q_2=-\frac{1}{3}t_3+\frac{1}{3}t_1^3,\,\,\, q_3=\frac{1}{5}t_1t_5-\frac{1}{9}t_3^2-\frac{1}{9}t_1^3t_3+\frac{1}{45}t_1^6,$$
$$
q_4=\frac{1}{21}(t_3-t_1^3)t_7-\frac{1}{25}t_5^2+\frac{1}{15}t_1^2t_3t_5+\frac{1}{75}t_1^5t_5-\frac{1}{27}t_1t_3^3-\frac{1}{315}t_1^7t_3+\frac{1}{4725}t_1^{10},...
$$
We can also define $q_{-1}=q_0=1.$ 

Note that the initial conditions $q_k$ are homogeneous in $t_j$ of weight $k (k+1)/2$.

\begin{thm}
The polynomial $q_k$ depends only on odd parameters $t_{2i-1},~\! (i=1,\dots,k)$ 
and has the form
$$
q_k=\frac{(-1)^{k+1}}{2k-1}q_{k-2}t_{2k-1}+\psi_k(t_1,t_3, \dots,t_{2k-3}) ,
$$
for some polynomials $\psi_k$ with rational coefficients.

The parameter $t_{2k-1}$ can be expressed in terms of $q_j$ as a Laurent polynomial with integer coefficients
$$
t_{2k-1}=(2k-1)\frac{(-1)^{k+1}q_k}{q_{k-2}}+\varphi_k(q_1,\dots,q_{k-1}) \in \mathbb Z[q_1^{\pm}, q_2^{\pm},\dots, q_{k-2}^{\pm}, q_{k-1}, q_k].
$$
\end{thm}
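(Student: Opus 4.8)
The plan is to establish the two claims in turn, using the determinantal description of the $x_k$ and $q_k$ together with a triangularity-with-respect-to-weight argument. First I would pin down the dependence on the odd parameters only. From the generating function $F(z,t,u)=e^{\sum_k (-1)^{k+1}(z+t_k)u^k/k}$ one sees that at $z=0$ the $x_k$ and hence the Casoratian $Q_k$ depend on all the $t_j$; but the Casoratian \eqref{det3} is built from $y_j=x_{2j-1}$, and $x_{2j-1}$ at $z=0$ involves only $t_1,t_2,\dots,t_{2j-1}$. The point is to show that the even parameters drop out after taking the determinant. I would argue this by the scaling/weight grading: assign weight $k$ to $t_k$ and to $x_k$, so that $q_k=Q_k(0)$ is homogeneous of weight $k(k+1)/2$ (as already noted in the excerpt). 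Then I would invoke a symmetry of $F$ — under $t_{2i}\mapsto -t_{2i}$ combined with the substitution $z\mapsto -z$ and $u\mapsto -u$, or more directly the observation that the continuum Burchnall–Chaundy polynomials depend only on odd times and our $x_{2j-1}(0)$ reduce correctly — to force the even-$t$ contributions to cancel in $Q_k(0)$. Concretely, I expect that a parity argument on $u$ in $F(0,t,u)F(0,t,-u)$, or the identity $x_{2j-1}(z)+x_{2j-1}(-z-1)$ having a structure that kills even times, does the job; alternatively one can simply check that the recursion $\Delta^2\phi_k=\phi_{k-1}$, $\phi_1=z$ used in the second theorem's proof already shows $Q_n=C_n$ depends only on the $\phi_k$, which at $z=0$ are the odd $x$'s.

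Next I would extract the leading term. Since $q_k$ is homogeneous of weight $k(k+1)/2$ and depends on $t_1,t_3,\dots,t_{2k-1}$, the variable $t_{2k-1}$ (weight $2k-1$) can appear at most linearly, and its coefficient must be homogeneous of weight $k(k+1)/2-(2k-1)=(k-2)(k-1)/2$, i.e.\ a multiple of $q_{k-2}$. To compute the exact coefficient I would expand the determinant \eqref{det3} for $Q_k$ along the column containing $x_{2k-1}$ (the last column): the coefficient of $x_{2k-1}$ in $Q_k$ is, up to sign, the Casoratian minor $C(y_1,\dots,y_{k-1})=Q_{k-1}$ — wait, more carefully, it is the $(1,k)$-cofactor, which equals $\pm Q_{k-1}$ only if the minor has the same shape; I would instead note that $x_{2k-1}$ occurs only in the first row of \eqref{det3}, so its cofactor is exactly the determinant of the lower-left $(k-1)\times(k-1)$ block, which is $Q_{k-1}$ evaluated at a shifted argument — and then use $x_{2k-1}(0)=\frac{(-1)^{k}}{2k-1}t_{2k-1}+\dots$. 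Combining, and tracking that the relevant block actually reduces to $q_{k-2}$ after using $x_0=1$, $x_1(0)=t_1$ relations, yields the claimed $\frac{(-1)^{k+1}}{2k-1}q_{k-2}t_{2k-1}$. This bookkeeping of signs and the identification of the minor with $q_{k-2}$ rather than $q_{k-1}$ is the step I would be most careful about; I would double-check it against the explicit $q_2,q_3,q_4$ listed just above the theorem.

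Finally, for the inversion I would argue by induction on $k$. Suppose $t_1,t_3,\dots,t_{2k-3}$ have already been expressed as Laurent polynomials in $q_1,\dots,q_{k-1}$ lying in $\mathbb Z[q_1^{\pm},\dots,q_{k-2}^{\pm},q_{k-1}]$, with $q_{j}$ invertible for $j\le k-2$; the base cases $t_1=q_1$, $t_3=q_1^3-3q_2$ are immediate from the explicit formulas. From the leading-term formula, $t_{2k-1}=(2k-1)(-1)^{k+1}q_k/q_{k-2}+\varphi_k$ where $\varphi_k=-(2k-1)(-1)^{k+1}\psi_k/q_{k-2}$, so I need $\psi_k(t_1,\dots,t_{2k-3})$, after substituting the inductive expressions, to be divisible by $q_{k-2}$ in the ring $\mathbb Z[q_1^{\pm},\dots,q_{k-3}^{\pm},q_{k-2},q_{k-1}]$. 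The cleanest way to get this is to avoid the division altogether: I would instead show directly, from the determinant \eqref{det3}, that $q_{k-2}t_{2k-1}-(2k-1)(-1)^{k+1}q_k \in \mathbb Z[t_1,t_3,\dots]$ maps into $\mathbb Z[q_1^{\pm},\dots,q_{k-3}^{\pm},q_{k-2},q_{k-1}]$ and is visibly a multiple of $q_{k-2}$ there — the integrality of coefficients coming from the integrality of the inverse formula \eqref{det2} ($t_k\in\mathbb Z[x_1,\dots,x_k]$) combined with the Laurent property of the discrete KdV equation established by Fomin–Zelevinsky for the Cauchy data $Q_{0,n}=q_n$, $Q_{m,-1}=Q_{m,0}=1$. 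In other words, the Laurent-with-integer-coefficients statement for the $t_{2k-1}$ is a consequence of (i) the leading-coefficient computation above and (ii) the already-known Laurent phenomenon for the $Q_{m,n}$, evaluated to read off that $q_{k-2}$ is the only denominator that can occur at stage $k$. The main obstacle I anticipate is precisely showing that no higher power of $q_{k-2}$, and none of $q_{k-1}$ or $q_k$, appears in the denominator — i.e.\ that the denominator is exactly $q_1\cdots q_{k-2}$ to the first power — which I would settle by a careful weight count: the denominator at stage $k$ has weight $\sum_{j\le k-2} j(j+1)/2$, and matching this against the weight $k(k+1)/2$ of $q_k$ forces each $q_j$ ($j\le k-2$) to appear to exactly the first power in the denominator of $t_{2k-1}$.
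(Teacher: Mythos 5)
Your overall skeleton (cofactor expansion of the Casoratian plus induction for the inversion) is the right one, but two enabling steps are missing, and the substitutes you propose for them do not work. First, the statement that $q_k$ depends only on the odd times is nowhere actually proved in your sketch. Weight homogeneity says nothing about which variables occur; a sign-flip symmetry in the $t_{2i}$ would at best give evenness, not independence; and the remark that $Q_n=C_n$ ``depends only on the $\phi_k$, which at $z=0$ are the odd $x$'s'' is beside the point, because the functions $\phi_k(z)=x_{2k-1}(z)$ themselves \emph{do} depend on the even times (e.g.\ $x_3$ contains $t_1t_2$), and the determinant at $z=0$ also involves the even-indexed $x_j(0)$. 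The paper's argument is the simple differential one: $\partial x_m/\partial t_{2p}=-\tfrac{1}{2p}x_{m-2p}$ sends each column of (\ref{det3}) to a multiple of an earlier column, so by multilinearity $\partial Q_k/\partial t_{2p}=0$. This independence is then exploited in an essential way: one may \emph{choose} the even times so that $x_{2p}(0)=0$ for all $p$. Likewise, your weight argument that the coefficient of $t_{2k-1}$ ``must be a multiple of $q_{k-2}$'' is a non sequitur (the space of weight-$(k-1)(k-2)/2$ polynomials in the odd times has dimension bigger than one), and your cofactor computation is left garbled: the $(1,k)$ cofactor of (\ref{det3}) is not ``$Q_{k-1}$ at a shifted argument''; its first column is $(1,0,\dots,0)^T$, and deleting it leaves exactly the $(k-2)\times(k-2)$ determinant of the same banded shape, i.e.\ $Q_{k-2}$, which is what produces $\frac{(-1)^{k+1}}{2k-1}q_{k-2}t_{2k-1}$.

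The second genuine gap is the Laurent-with-integer-coefficients claim for $t_{2k-1}$. The Fomin--Zelevinsky Laurent property concerns the values $Q_{m,n}$ of the discrete KdV recurrence as Laurent polynomials in the initial data; it gives no information about the auxiliary parameters $t_{2k-1}$, so nothing about their denominators can be ``read off'' from it. Moreover the conclusion you propose to extract from a weight count --- that each $q_j$, $j\le k-2$, appears in the denominator to exactly the first power --- is false: the paper's own formula for $t_7$ has denominator $q_1^2q_2$. The paper's route is different and is what your induction actually needs: with the even times fixed so that $x_{2p}(0)=0$, the determinant (\ref{det3}) at $z=0$ gives the recursion $q_k=(-1)^{k+1}x_{2k-1}(0)\,q_{k-2}+g_k\bigl(x_1(0),x_3(0),\dots,x_{2k-3}(0)\bigr)$ with $g_k$ having integer coefficients; solving recursively for $x_{2k-1}(0)$ shows by induction that $x_{2k-1}(0)\in\mathbb Z[q_1^{\pm},\dots,q_{k-2}^{\pm},q_{k-1},q_k]$, and then $t_{2k-1}$ is obtained from (\ref{det2}), an integer polynomial in the $x_j(0)$ (the even ones vanishing by the choice made). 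In your sketch both the divisibility of $\psi_k$ by $q_{k-2}$ and the integrality of the coefficients are asserted rather than derived, so the inductive step does not close as written.
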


\begin{proof}
The generating function (\ref{F}) satisfies the differential relations
$$
\frac{\partial}{\partial t_j}F=(-1)^{j+1}\frac{u^j}{j} F,
$$
which imply that
$$
\frac{\partial x_k}{\partial t_j}=\frac{(-1)^{j+1}}{j}x_{k-j}.
$$
Differentiating the determinantal expression (\ref{det3}) with respect to $t_2$ we see that the derivative of the $j$-th column is precisely the preceding column and consequently, as the derivative of the first column is zero, that $Q_k$ does not depend on $t_2$. Similarly we see that the same is true for all even times (cf. \cite{BEL}).

This means that we can choose the even times as we want. Let us choose them in such a way that the corresponding $x_{2p}(0)=0$ for all $p\in \mathbb Z_+.$
This then defines the $t_{2p}$ as certain polynomials of odd times with rational coefficients:
$$t_2=t_1^2,\quad t_4=\frac{4}{3}t_1t_3+\frac{1}{2}t_2^2-t_1^2t_2+\frac{1}{6}t_1^4=\frac{4}{3}t_1t_3-\frac{1}{3}t_1^4, \quad \dots.$$
From the determinant (\ref{det3}) we see that in this case the $q_k=Q_k(0)$ satisfy the recurrence relation
$$q_k=(-1)^{k+1}x_{2k-1}q_{k-2}+g_k(x_1,x_3,\dots, x_{2k-3}),$$
for some polynomial $g_k$ with integer coefficients, which implies the first claim. This relation also allows us to express, 
recursively, $x_{2k-1}$ (and hence $t_{2k-1}$ using (\ref{det2})) as a Laurent polynomial of $q_1,\dots, q_{k}$ with integer coefficients.
\end{proof}

The first four expressions for the $t_{2k-1}$ have the form:
$$
t_1=q_1, \quad t_3=-3q_2+q_1^3,\quad t_5=\frac{5q_3-5q_1^3q_2+5q_2^2+q_1^6}{q_1},
$$
$$
t_7=-\frac{7q_1^2q_4+14q_2^2q_3-q_1^9q_2+7q_3^2+7q_2^4-14q_1^3q_2^3+7q_1^6q_2^2-7q_1^3q_2q_3}{q_1^2q_2}.
$$

Combining the expression (\ref{det3}) for $Q_n$ with the previous theorem we have

\begin{corollary}
Laurent phenomenon for the Burchnall-Chaundy sequence (\ref{dBCh}):
$$
A_n~\! Q_n(z) \in \mathbb Z [z; q_1^{\pm},\dots, q_{n-2}^{\pm}, q_{n-1}, q_{n}], \quad q_k=Q_k(0).
$$
\end{corollary}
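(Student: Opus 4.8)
The plan is to push the Casoratian formula (\ref{det3}) through a Cauchy--Binet decomposition, isolate the $z$-dependence into minors that are Schur functions at a one-row specialization, and control denominators by the hook lengths of a staircase. Two elementary ingredients come first. From the generating function (\ref{F}) one has $F(z,t,u)=(1+u)^{z}F(0,t,u)$, hence
\begin{equation*}
x_m(z)=\sum_{l=0}^{m}\binom{z}{m-l}\,x_l(0);
\end{equation*}
with the even-time normalization $x_{2p}(0)=0$ of the previous theorem, the only surviving constants are $x_0(0)=1$ and $x_{2k-1}(0)\in\mathbb Z[q_1^{\pm},\dots,q_{k-2}^{\pm},q_{k-1},q_k]$. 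Writing (\ref{det3}) as $Q_n(z)=\det[x_{2i-j}(z)]_{i,j=1}^{n}$, the displayed identity gives the exact factorization
\begin{equation*}
\big[x_{2i-j}(z)\big]_{i,j=1}^{n}=P(z)\,M_0,\qquad P(z)_{ia}=\binom{z}{2i-a}\ (1\le i\le n,\ 1\le a\le 2n),\quad (M_0)_{aj}=x_{a-j}(0),
\end{equation*}
with $M_0$ a $2n\times n$ matrix whose entries lie in $\mathbb Z[q_1^{\pm},\dots,q_{n-2}^{\pm},q_{n-1},q_n]$. Already here the Laurent property is visible without the optimal constant: factoring $(2c-1)!$ out of the $c$-th column of (\ref{det3}) gives $\big(\prod_{c=1}^{n}(2c-1)!\big)Q_n(z)\in\mathbb Z[z;t_1,t_3,\dots,t_{2n-1}]$ (using that $Q_n$ is independent of the even times), and the previous theorem's Laurent expressions for $t_{2k-1}$ then land $Q_n$ in the required ring.

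To reach the \emph{sharp} constant $A_n$ I would apply Cauchy--Binet,
\begin{equation*}
Q_n(z)=\sum_{\substack{S\subseteq\{1,\dots,2n\},\ |S|=n}}\det\!\big(P(z)_{[n],S}\big)\cdot\det\!\big((M_0)_{S,[n]}\big),
\end{equation*}
where each $\det((M_0)_{S,[n]})\in\mathbb Z[q_1^{\pm},\dots,q_{n-2}^{\pm},q_{n-1},q_n]$, so it remains to prove $A_n\,\det(P(z)_{[n],S})\in\mathbb Z[z]$ for every $S$. Now $\det(P(z)_{[n],S})=\det[h_{2i-a_l}]$ with $h_m:=\binom zm$ and $S=\{a_1<\dots<a_n\}$; by the Aitken/Jacobi--Trudi identity this equals $\pm s_{\delta_n/\mu(S)}(1^z)$, the skew Schur polynomial specialized at $h_m\mapsto\binom zm$, where $\delta_n=(n,n-1,\dots,1)$ is the staircase and $\mu(S)_i=a_{n+1-i}-(n+1-i)$ (the term vanishing unless $a_j\le 2j$ for all $j$, i.e.\ $\mu(S)\subseteq\delta_n$). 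For $S=\{1,\dots,n\}$ this is $s_{\delta_n}(1^z)=\frac1{H_{\delta_n}}\prod_{(i,j)\in\delta_n}(z+j-i)$ by the hook--content formula, and since the staircase has only odd hook lengths with $H_{\delta_n}=\prod_{j=1}^{n}(2j-1)!!=A_n$, this already exhibits $A_n\,s_{\delta_n}(1^z)\in\mathbb Z[z]$ as the leading term.

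The heart of the matter---and the step I expect to resist a quick argument---is the combinatorial claim that for \emph{every} $\mu\subseteq\delta_n$ the polynomial $s_{\delta_n/\mu}(1^z)$ has denominator dividing $H_{\delta_n}=A_n$. This is not visible termwise in the Littlewood--Richardson expansion $s_{\delta_n/\mu}=\sum_\nu c^{\delta_n}_{\mu\nu}s_\nu$, because $H_\nu\nmid H_{\delta_n}$ can occur for $\nu\subseteq\delta_n$ (already $\nu=(2,2)\subseteq\delta_3$ has $H_\nu=12\nmid 45$); one must check that the Littlewood--Richardson combination systematically cancels the excess---for instance $s_{\delta_3/(1)}(1^z)=\tfrac1{15}z(z^2-1)(2z^2-3)$ and $s_{\delta_3/(2)}(1^z)=\tfrac13 z^2(z^2-1)$, both with denominators dividing $45$. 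I would establish this either (i) directly from the Jacobi--Trudi determinant $s_{\delta_n/\mu}(1^z)=\det\big[\binom{z}{(\delta_n)_i-\mu_j-i+j}\big]_{i,j=1}^n$ by column operations that pull one factor $(2j-1)!!$ out of the $j$-th column, or (ii) from an Okounkov--Olshanski/Naruse-type formula for $s_{\lambda/\mu}(1^z)$, whose denominators are built solely from hook lengths of $\lambda=\delta_n$. Granting the claim, the corollary follows at once: $A_nQ_n(z)=\sum_S\big(A_n\det(P(z)_{[n],S})\big)\det((M_0)_{S,[n]})\in\mathbb Z[z;q_1^{\pm},\dots,q_{n-2}^{\pm},q_{n-1},q_n]$; and the identical argument---with Wronskians in place of Casoratians and $z^m/m!$ in place of $\binom zm$---yields the stated analogue for the classical Burchnall--Chaundy polynomials $P_n$ in the initial data $c_k=P_k(0)$.
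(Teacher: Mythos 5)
Your overall route is sound and, in its first step, coincides with the paper's own (very terse) argument: the corollary is obtained by combining the Casoratian formula (\ref{det3}) with the preceding theorem, i.e.\ with the fact that the constants $t_{2k-1}$ (equivalently $x_{2k-1}(0)$, after the normalization $x_{2p}(0)=0$) are integer-coefficient Laurent polynomials in $q_1,\dots,q_k$ with only $q_1,\dots,q_{k-2}$ inverted. Your observation that $F(z,t,u)=(1+u)^zF(0,t,u)$ gives $x_m(z)=\sum_l\binom{z}{m-l}x_l(0)$, and your column-wise estimate yielding $\bigl(\prod_{c=1}^n(2c-1)!\bigr)Q_n(z)\in\mathbb Z[z;q_1^{\pm},\dots,q_{n-2}^{\pm},q_{n-1},q_n]$, are correct and already establish the Laurent phenomenon itself. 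The Cauchy--Binet decomposition into skew staircase Schur polynomials at the principal-type specialization, with the straight-shape term $s_{\delta_n}$ accounting for $Q_n^0$ and $H_{\delta_n}=\prod_{j=1}^n(2j-1)!!=A_n$, is also a correct and rather illuminating refinement of the determinant structure (up to the harmless transposition $\mu\leftrightarrow\mu'$ coming from whether one reads $\binom{z}{m}$ as an $h$- or $e$-specialization; the staircase is self-conjugate, so this does not matter).

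The genuine gap is exactly where you flag it: the statement asserts the specific constant $A_n$, and your argument for it rests on the unproven claim that $A_n\,s_{\delta_n/\mu}(1^z)\in\mathbb Z[z]$ for \emph{every} $\mu\subseteq\delta_n$, i.e.\ that the denominator of each skew term divides the hook product of the outer staircase. You verify examples and correctly note that a termwise Littlewood--Richardson argument cannot work, but neither of your two proposed strategies is carried out, so as written the proposal only proves the corollary with $A_n$ replaced by the strictly larger constant $\prod_{c=1}^n(2c-1)!$. The claim is in fact true and your route (ii) does close it: the Okounkov--Olshanski formula expresses $s_{\lambda/\mu}(1^z)$ as $H_\lambda^{-1}$ times a sum of products of integer linear factors in $z$, so the denominators involve only hooks of $\lambda=\delta_n$; with that lemma supplied, your Cauchy--Binet argument is complete and actually more detailed on the constant than the paper, whose proof of this corollary is a one-line combination of (\ref{det3}) with the previous theorem and leaves the bookkeeping of $A_n$ implicit. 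So: approach viable and essentially an elaboration of the paper's, but the decisive quantitative step distinguishing $A_n$ from the trivial bound is assumed rather than proven.
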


The first three polynomials are given explicitly in the Introduction.

The polynomials $Q^0_n(z)$ corresponding to zero Cauchy data $q_n=Q^0_n(0)=0$ 
have to be dealt with as a special case because of the Laurent nature of the solution. Instead we can consider their expressions in terms of the $t_j$, which are polynomial, and therefore allow us to set $t_j=0$.
The corresponding polynomials admit the following explicit description.
  
Define the polynomials $Q^0_n(z)$ 
by the following recurrence relation
\begin{equation}
\label{recur}
Q^0_n(z)=\frac{1}{(2n-1)!!} \prod_{j=1}^n(z+n+1-2j)Q^0_{n-1}(z),
\end{equation}
or explicitly as
\begin{equation}
\label{expli}
Q^0_n(z)=A_n^{-1}~ z^{[\frac{n+1}{2}]}\prod_{j=1}^{n-1} (z^2-j^2)^{[\frac{n+1-j}{2}]},\quad A_n=\prod_{j=1}^n(2j-1)!!,
\end{equation}
where $(2k+1)!!=1\times 3\times 5 \times\dots \times (2k+1)$ and $[x]$ denotes the integer part of $x.$

\begin{thm}
The polynomials $Q^0_n(z)$ satisfy the difference Burchnall-Chaundy relations with zero initial data $Q^0_n(0)=0.$
\end{thm}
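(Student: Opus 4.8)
The plan is to verify the recurrence (\ref{dBCh}) directly for the explicit family $Q^0_n(z)$ defined by (\ref{recur}), exploiting the fact that the factor relating $Q^0_n$ to $Q^0_{n-1}$ is a product of shifted linear factors, so that the two difference shifts $z\mapsto z+1$ and $z\mapsto z$ act on it in a particularly simple, telescoping way. First I would record the consequence of (\ref{recur}) that I actually need: writing $\ell_n(z)=\frac{1}{(2n-1)!!}\prod_{j=1}^n(z+n+1-2j)$, we have $Q^0_n(z)=\ell_n(z)Q^0_{n-1}(z)$, and hence $Q^0_{n+1}(z)=\ell_{n+1}(z)\ell_n(z)Q^0_{n-1}(z)$. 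Substituting these into (\ref{dBCh}) and dividing through by $Q^0_{n-1}(z)Q^0_{n-1}(z+1)$ — which is legitimate as an identity of rational functions, the polynomial identity then following since both sides are polynomials — reduces the claim to the purely elementary identity
\begin{equation*}
\ell_{n+1}(z+1)\ell_n(z+1)\,\frac{Q^0_{n-1}(z+1)}{Q^0_{n-1}(z)}-\ell_{n+1}(z)\ell_n(z)\,\frac{Q^0_{n-1}(z)}{Q^0_{n-1}(z+1)}\cdot\frac{Q^0_{n-1}(z)}{Q^0_{n-1}(z+1)}=\text{(RHS)}/\bigl(Q^0_{n-1}(z)Q^0_{n-1}(z+1)\bigr),
\end{equation*}
so the structure suggests that one should rather keep a common reference point. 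A cleaner route is to substitute $Q^0_{n+1}=\ell_{n+1}\ell_n Q^0_{n-1}$ and $Q^0_n=\ell_n Q^0_{n-1}$ and cancel the common $Q^0_{n-1}(z)Q^0_{n-1}(z+1)$ that appears as a factor of every term once one also uses the one-step relation $Q^0_{n-1}(z+1)=\ell_{n-1}(z+1)Q^0_{n-2}(z+1)$ together with $Q^0_{n-1}(z)=\ell_{n-1}(z)Q^0_{n-2}(z)$; iterating down, every $Q^0_j$ is a product of the $\ell$'s, so (\ref{dBCh}) becomes an identity among products of shifted linear polynomials with no denominators once cleared. The vanishing $Q^0_n(0)=0$ is immediate from (\ref{expli}), since the exponent $[\frac{n+1}{2}]$ of $z$ is at least $1$ for $n\ge 1$ (and $Q^0_{-1}=Q^0_0=1$ makes the base of the recurrence consistent).

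The key steps, in order, are: (i) from (\ref{recur}), express $Q^0_j(z)$ and its shift $Q^0_j(z+1)$ as explicit products $\prod_{i\le j}\ell_i(z)$, $\prod_{i\le j}\ell_i(z+1)$, and note $\ell_i(z+1)=\ell_i(z)\cdot\frac{z+i+1}{z+2-i}$ — that is, passing from $\ell_i(z)$ to $\ell_i(z+1)$ multiplies by $(z+i+1)$ and divides by $(z+2-i)$, since the product $\prod_{j=1}^i(z+i+1-2j)$ runs over $z+i-1,z+i-3,\dots,z+1-i$ and the shift drops the bottom factor $z+1-i$ and adds a top factor $z+i+1$; (ii) substitute into (\ref{dBCh}), clearing the common Casoratian-type factor $Q^0_{n-1}(z)Q^0_{n-1}(z+1)$ from both sides so that what remains is an identity purely in the $\ell_i$'s evaluated at $z$ and $z+1$; (iii) using the telescoping ratios from (i), reduce the surviving identity to the single scalar identity $\ell_{n+1}(z+1)\ell_n(z+1)-\ell_{n+1}(z-? )\dots = \ell_n(z)\ell_n(z+1)$ — more precisely, after cancellation the statement becomes a relation of the form
\begin{equation*}
\ell_{n+1}(z+1)\,\frac{Q^0_n(z+1)}{Q^0_{n-1}(z+1)}\cdot Q^0_{n-1}(z)-\ell_{n+1}(z)\,\frac{Q^0_n(z)}{Q^0_{n-1}(z)}\cdot Q^0_{n-1}(z+1)=Q^0_n(z)Q^0_n(z+1),
\end{equation*}
and substituting $Q^0_n/Q^0_{n-1}=\ell_n$ turns the left side into $Q^0_{n-1}(z)Q^0_{n-1}(z+1)\bigl(\ell_{n+1}(z+1)\ell_n(z+1)-\ell_{n+1}(z)\ell_n(z)\bigr)$ while the right side is $Q^0_{n-1}(z)Q^0_{n-1}(z+1)\ell_n(z)\ell_n(z+1)$; (iv) prove the resulting polynomial identity in one variable $z$,
\begin{equation*}
\ell_{n+1}(z+1)\ell_n(z+1)-\ell_{n+1}(z)\ell_n(z)=\ell_n(z)\ell_n(z+1),
\end{equation*}
by a direct computation, writing each $\ell$ out as its product of linear factors and comparing leading coefficients and roots, or equivalently by recognizing $\ell_{n+1}(z)\ell_n(z)=Q^0_{n+1}(z)/Q^0_{n-1}(z)$ as a ratio for which the $\Delta$-difference telescopes. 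An alternative to (iv): since $Q^0_{n+1}(z)/Q^0_{n-1}(z)=\ell_{n+1}(z)\ell_n(z)$ and the left side of (iv) is $\Delta\bigl(Q^0_{n+1}/Q^0_{n-1}\bigr)(z)$, the identity reads $\Delta\bigl(Q^0_{n+1}/Q^0_{n-1}\bigr)=\bigl(Q^0_n/Q^0_{n-1}\bigr)\cdot T\bigl(Q^0_n/Q^0_{n-1}\bigr)$, which is exactly the ``residue-free'' form of the difference Burchnall-Chaundy equation and can be checked by partial fractions.

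The main obstacle is step (iv): although it is ``just'' a one-variable polynomial identity, the two products $\ell_{n+1}(z+1)\ell_n(z+1)$ and $\ell_{n+1}(z)\ell_n(z)$ are degree-$(2n+1)$ polynomials whose difference must factor as the degree-$(2n)$ product $\ell_n(z)\ell_n(z+1)$ times nothing — i.e. one must check the leading terms cancel correctly (they do: both degree-$(2n+1)$ leading coefficients equal $\frac{1}{(2n+1)!!(2n-1)!!}$) and then match the remaining degree-$2n$ polynomial on the nose. I would handle this by comparing the sets of roots: $\ell_n(z)$ vanishes at $z=-(n-1),-(n-3),\dots,(n-1)$, so $\ell_n(z)\ell_n(z+1)$ vanishes at $z\in\{-(n-1),\dots,n-1\}\cup\{-n,\dots,n-2\}=\{-n,-(n-1),\dots,n-1\}$, a run of $2n$ consecutive integers; one then verifies that $\ell_{n+1}(z+1)\ell_n(z+1)-\ell_{n+1}(z)\ell_n(z)$ vanishes at each of these $2n$ points (each zero of $\ell_n(z)$ or of $\ell_n(z+1)$ is a common zero of a corresponding shifted factor pattern in $\ell_{n+1}$), and matches the common leading coefficient, whence the two sides agree. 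This is a finite, transparent verification but it is where the actual content sits; everything upstream is bookkeeping with the factorization (\ref{expli}). I would also remark that this theorem is consistent with the earlier Casoratian Theorem specialized to $t_j=0$, which gives an independent a posteriori check.
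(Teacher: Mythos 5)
Your proposal is correct and is essentially the paper's own ``direct check'': substituting $Q^0_{n}=\ell_n Q^0_{n-1}$ and $Q^0_{n+1}=\ell_{n+1}\ell_n Q^0_{n-1}$ reduces (\ref{dBCh}) to the one-variable identity $\ell_{n+1}(z+1)\ell_n(z+1)-\ell_{n+1}(z)\ell_n(z)=\ell_n(z)\ell_n(z+1)$, which indeed holds, most transparently because $\ell_{n+1}(z+1)\ell_n(z+1)=\frac{1}{(2n+1)!!\,(2n-1)!!}\prod_{k=-n+1}^{n+1}(z+k)$ and $\ell_{n+1}(z)\ell_n(z)=\frac{1}{(2n+1)!!\,(2n-1)!!}\prod_{k=-n}^{n}(z+k)$, so their difference telescopes to $\frac{2n+1}{(2n+1)!!\,(2n-1)!!}\prod_{k=-n+1}^{n}(z+k)=\ell_n(z)\ell_n(z+1)$, matching your root-and-leading-coefficient check in step (iv). The only flaw is the unused side remark in step (i) that $\ell_i(z+1)=\ell_i(z)\cdot\frac{z+i+1}{z+2-i}$, which is false because shifting $z$ by $1$ changes the parity of every factor in the product (that drop-bottom/add-top formula is valid only for the shift $z\mapsto z+2$); nothing in your steps (iii)--(iv) relies on it, so the argument stands.
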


Proof is by a direct check.

The polynomials $Q^0_n(z)$ correspond to all $t_k=0$ and thus have Casoratian form (\ref{Cas}) with 
binomial $$x_k=\frac{z(z-1)\dots(z-k+1)}{k!}.$$
It is interesting that they also can be given as symmetric Casoratians of simple monomials. 

The {\it symmetric Casoratian} $C^*(f_1,\dots f_n)$ of the functions $f_1(z), \dots f_n(z)$ is defined as the determinant 
\begin{equation}
\label{CW*}
C^*(f_1,\dots f_n)=\det ||f_i(z+n+1-2j)||, \,\, i,j=1,\dots,n.
\end{equation}

Define the functions 
\begin{equation}
\label{fj}
f_j(x)=\frac{1}{(2j-1)!} z^{2j-1}, \,\, j=1,\dots, n.
\end{equation}

\begin{thm}
The polynomials $Q^0_n(z)$ can be given as the symmetric Casoratians
\begin{equation}
\label{CW1}
Q^0_n(z)=2^{-n(n+1)/2} ~C^*(f_1,\dots f_n)
\end{equation}
of the functions (\ref{fj}).
\end{thm}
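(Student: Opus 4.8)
The plan is to establish the identity \eqref{CW1} by identifying the symmetric Casoratian $C^*(f_1,\dots,f_n)$ with the ordinary Casoratian $C(y_1,\dots,y_n)$ that we already know equals $2^{n(n+1)/2}Q_n^0(z)$ in the $t_j=0$ specialization. Recall that for $t_k=0$ the entries are the binomials $x_k(z)=z(z-1)\cdots(z-k+1)/k!$, $y_k=x_{2k-1}$, and $Q_n^0=C(y_1,\dots,y_n)$ up to the power of $2$. So the task reduces to a purely linear-algebraic comparison of two $n\times n$ determinants: $\det\|y_i(z+j-1)\|$ versus $\det\|f_i(z+n+1-2j)\|$ where $f_i(z)=z^{2i-1}/(2i-1)!$.

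First I would note that the symmetric Casoratian $C^*$ uses the shifted arguments $z+n+1-2j$ for $j=1,\dots,n$, i.e. the values $z+n-1,z+n-3,\dots,z-n+1$, which is exactly the symmetric shift pattern appearing in the Dodgson equation \eqref{dBCh2}; this is the natural ``centered'' analogue of the Casoratian column shifts $z,z+1,\dots,z+n-1$ up to an overall translation $z\mapsto z-(n-1)$ combined with doubling the step. Since a uniform translation of $z$ only shifts the polynomial and doubling the column step scales the determinant by a power of $2$ (via multilinearity after a discrete-derivative reduction), I expect the relation $C^*(f_1,\dots,f_n)(z) = 2^{?}\,C(g_1,\dots,g_n)(w)$ for an appropriate substitution, where the $g_i$ are monomials $w^{2i-1}$ up to constants. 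The key step is then to show that the Casoratian of the odd binomials $y_i=x_{2i-1}$ agrees, up to a constant matrix acting on the rows, with the Casoratian of the odd monomials $f_i$. This follows because the $x_{2i-1}(z)$ and the $z^{2i-1}/(2i-1)!$ span the same subspace: each $x_{2i-1}$ is a polynomial of degree $2i-1$ in $z$, so the change of basis from $\{x_1,x_3,\dots,x_{2n-1}\}$ to $\{z,z^3/3!,\dots,z^{2n-1}/(2n-1)!\}$ is triangular — but one must check that the off-diagonal corrections involve only \emph{odd} lower powers so the triangular matrix is unipotent up to the leading constants, giving determinant $1$ (after absorbing the normalization $1/(2i-1)!$). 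Equivalently, use the explicit product formula \eqref{expli} for $Q_n^0$ and compute the symmetric Casoratian of the monomials directly by a Vandermonde-type evaluation.

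Concretely, the cleanest route: evaluate $C^*(f_1,\dots,f_n)(z)=\det\left\|\frac{(z+n+1-2j)^{2i-1}}{(2i-1)!}\right\|_{i,j=1}^n$. Pulling out $1/\prod_{i=1}^n(2i-1)!$, the remaining determinant is a polynomial in $z$ whose roots can be located by symmetry: it is antisymmetric under reflections $z+n+1-2j \leftrightarrow -(z+n+1-2j')$ in a suitable sense, forcing divisibility by $\prod (z^2-k^2)$-type factors, and its leading behavior in $z$ is governed by a classical Vandermonde in the variables $(z+n+1-2j)^2$. Matching zeros and leading coefficient against \eqref{expli} then pins down the constant as $2^{n(n+1)/2}$ — this is where the $n(n+1)/2$ arises, from $\prod_{j<k}\big((z+n+1-2j)-(z+n+1-2k)\big)=\prod_{j<k} 2(k-j)$, whose total power of $2$ is $\binom{n}{2}$, together with another $\binom{n+1}{2}-\binom{n}{2}=n$ worth coming from the odd-vs-squared bookkeeping, or more directly by evaluating both sides at one convenient value of $z$.

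The main obstacle I anticipate is the bookkeeping of the power of $2$ and the verification that the triangular base-change matrix between $\{x_{2i-1}\}$ and $\{z^{2i-1}/(2i-1)!\}$ is genuinely unit-determinant rather than contributing extra factors; a subtlety is that $x_{2i-1}(z)=z(z-1)\cdots(z-2i+2)/(2i-1)!$ is \emph{not} an even or odd function, so its expansion in monomials contains both parities, and one must argue that only the odd part survives in the Casoratian — this should follow from the symmetric ($z\mapsto -z$ invariant) choice of column shifts $z+n+1-2j$, under which even-degree contributions cancel by antisymmetry of the determinant. I would handle this either by the root-and-leading-coefficient comparison sketched above (which sidesteps the base change entirely), or by first proving a lemma that $C^*(h_1,\dots,h_n)=C^*(\bar h_1,\dots,\bar h_n)$ whenever $h_i-\bar h_i$ lies in the span of $h_1,\dots,h_{i-1}$ plus even polynomials, then applying it to reduce $x_{2i-1}$ to $z^{2i-1}/(2i-1)!$.
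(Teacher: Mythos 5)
Your favoured ``cleanest route'' is in essence the paper's own proof: the paper disposes of this theorem in one line (``the proof easily follows from the Vandermonde formula''), meaning exactly the computation you sketch --- evaluate $C^*(f_1,\dots,f_n)=\det\|f_i(w_j)\|$, $w_j=z+n+1-2j$, for the odd monomials $f_i(w)=w^{2i-1}/(2i-1)!$. Note, however, that no root-location or leading-coefficient matching is needed: pulling $w_j$ out of each column gives
$$\det\|w_j^{2i-1}\|=\Bigl(\prod_{j=1}^n w_j\Bigr)\prod_{1\le j<k\le n}\bigl(w_k^2-w_j^2\bigr),$$
and $w_k^2-w_j^2=-2(k-j)\cdot 2(z+n+1-j-k)$, so the determinant factors completely into exactly the linear factors appearing in \eqref{expli}; only the constant remains to be collected.

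It is precisely at the constant that your proposal has a genuine gap. The ``another $\binom{n+1}{2}-\binom{n}{2}=n$ worth coming from the odd-vs-squared bookkeeping'' is asserted, not derived, and a careful count gives something different: the two factors of $2$ per pair yield $2^{2\binom{n}{2}}$, while $\prod_{i=1}^n(2i-1)!=A_n\,2^{\binom{n}{2}}\prod_{m=1}^{n-1}m!$ and $\prod_{j<k}(k-j)=\prod_{m=1}^{n-1}m!$, so everything collapses to $C^*(f_1,\dots,f_n)=(-1)^{\binom{n}{2}}\,2^{\binom{n}{2}}\,Q^0_n(z)$. Thus this route produces the power $2^{n(n-1)/2}$, not $2^{n(n+1)/2}$ (check $n=1$: $C^*(f_1)=z=Q^0_1$; $n=2$: $C^*(f_1,f_2)=-\tfrac{2}{3}z(z^2-1)=-2Q^0_2$), so with the literal definitions \eqref{CW*} and \eqref{fj} no bookkeeping can deliver the stated factor $2^{-n(n+1)/2}$; the normalization (and the sign, which can be absorbed by reversing the column order) has to be adjusted, and your vague extra-$2^n$ step glosses over exactly the point a complete proof must settle. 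Two further slips in your first route: by \eqref{Cas} the ordinary Casoratian $C(y_1,\dots,y_n)$ at $t_j=0$ equals $Q^0_n$ itself, not $2^{n(n+1)/2}Q^0_n$; and the proposed lemma that adding even polynomials to the $h_i$ leaves $C^*$ unchanged is false already for $n=1$ (replacing $z$ by $z+1$ changes the determinant), so the reduction from $x_{2i-1}$ to odd monomials cannot be justified by parity alone --- better to avoid that route entirely, as you eventually do, and carry the Vandermonde factorization through to the end.
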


The proof easily follows from the Vandermonde formula.

\section{Continuum limit: usual Burchnall-Chaundy polynomials}

It is well-known \cite{AMM,AM} that the Burchnall-Chaundy polynomials $P_n$ are the $\tau$-functions of the rational solutions of the 
Korteweg-de Vries equation
$$
u_{T_1}=D^3u - 6u Du, \quad D=\frac{d}{dx}
$$
and its higher analogues 
$$
u_{T_k}=D^{2k+1}u+\dots
$$
(see \cite{Miura V}, where these equations are precisely defined for scaled $u$). Namely, in a proper parametrisation
$$u(x,T_1, \dots, T_n)= -2D^2 \log P_n(x,T_1,\dots, T_n)$$
are the general rational solutions of the KdV hierarchy \cite{AMM, AM}.

Adler and Moser realised that the parameters $\tau_k$ they used (see the formulae in the Introduction) are different from the KdV times $T_k$ 
but are related to them by a non-trivial invertible polynomial transformation.

We claim that our parameters $t_{2k+1}$ are simply related to the KdV times by
the scaling $$t_{2k+1}=4^k(2k+1)T_k.$$

\begin{thm}
The continuum limit 
$$
P_n(x, t_1,t_3,\dots, t_{2n-1})= \lim_{\varepsilon \to 0}\varepsilon^{\frac{n(n+1)}{2}} Q_n(\frac{x}{\varepsilon}, \frac{t_3}{\varepsilon^{3}}, \dots, \frac{t_{2n-1}}{\varepsilon^{2n-1}}),
$$
yields the usual Burchnall-Chaundy polynomials parametrized by the scaled KdV times $t_3,\dots, t_{2n-1}.$
\end{thm}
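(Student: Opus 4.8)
The plan is to pass to the limit inside the explicit Casoratian formula \eqref{det3}. With the conventions $x_0=1$ and $x_m=0$ for $m<0$, that determinant is nothing but
$$
Q_n(z)=\det\bigl\|\,x_{2i-j}(z)\,\bigr\|_{i,j=1}^{n},
$$
and, since (see the proof of Theorem~3) $Q_n$ does not depend on the even parameters $t_{2p}$, we may set $t_{2p}=0$ throughout; the parameter $t_1$ enters only through $z+t_1$, so in the scaling $z=x/\varepsilon$ it is rescaled as $t_1\mapsto t_1/\varepsilon$ and simply combines with $x/\varepsilon$ into $(x+t_1)/\varepsilon$ (thus the limit will depend on $x,t_1$ only through $x+t_1$). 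The key elementary input is a scaling limit for the building blocks. With $t_{2p}=0$, formula \eqref{F} reads $F(z,t,u)=(1+u)^{z}\,e^{\,t_1 u+\sum_{j\ge1}t_{2j+1}u^{2j+1}/(2j+1)}$; substituting $z=x/\varepsilon$, $t_1\mapsto t_1/\varepsilon$, $t_{2j+1}\mapsto t_{2j+1}/\varepsilon^{2j+1}$ $(j\ge1)$ and $u=\varepsilon v$, and using $(1+\varepsilon v)^{x/\varepsilon}\to e^{xv}$, gives
$$
(1+\varepsilon v)^{x/\varepsilon}\,e^{\,t_1 v+\sum_{j\ge1}t_{2j+1}v^{2j+1}/(2j+1)}\ \longrightarrow\ \Theta(x,v):=e^{\,(x+t_1)v+\sum_{j\ge1}t_{2j+1}v^{2j+1}/(2j+1)}
$$
as $\varepsilon\to0$, uniformly on compact sets. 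Reading off the coefficient of $v^k$, this says $\varepsilon^{k}x_k\bigl(x/\varepsilon;\,t_{2j+1}/\varepsilon^{2j+1}\bigr)\to\theta_k(x)$ uniformly on compact sets, where $\theta_k:=[v^k]\,\Theta$; from $\partial_x\Theta=v\Theta$ one has $\theta_0=1$ and $\theta_k'=\theta_{k-1}$, i.e. the $\theta_k$ are exactly the Adler--Moser building blocks \cite{AM}. (All of this is routine polynomial manipulation, since each $x_k$ is a polynomial of degree $k$ in $z$ of bounded total degree in the parameters, so these are just scaling limits of polynomials and the convergence is $C^\infty$ on compacta.)

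Now pass to the determinant. In the Leibniz expansion of $\det\|x_{2i-j}(x/\varepsilon)\|$, the summand attached to a permutation $\sigma$ carries the scalar $\prod_{i}\varepsilon^{-(2i-\sigma(i))}=\varepsilon^{-n(n+1)/2}$, because $\sum_i\bigl(2i-\sigma(i)\bigr)=n(n+1)-\tfrac{n(n+1)}{2}=\tfrac{n(n+1)}{2}$ \emph{for every $\sigma$}. Hence the determinant factors cleanly, and by the previous paragraph
$$
\varepsilon^{\frac{n(n+1)}{2}}\,Q_n\!\left(\tfrac{x}{\varepsilon};\,\tfrac{t_3}{\varepsilon^{3}},\dots,\tfrac{t_{2n-1}}{\varepsilon^{2n-1}}\right)\ \longrightarrow\ \det\bigl\|\,\theta_{2i-j}(x)\,\bigr\|_{i,j=1}^{n}=\det\bigl\|\,\theta_{2i-1}^{\,(j-1)}(x)\,\bigr\|=W\!\bigl(\theta_1,\theta_3,\dots,\theta_{2n-1}\bigr)(x)
$$
as $\varepsilon\to0$, where we used $\theta_{2i-j}=\theta_{2i-1}^{(j-1)}$ and $W(f_1,\dots,f_n):=\det\|f_i^{(j-1)}\|$ is the Wronskian; the limit is manifestly a polynomial in $x$ and in $t_3,\dots,t_{2n-1}$.

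It remains to identify $W(\theta_1,\dots,\theta_{2n-1})$ with the Burchnall--Chaundy polynomial $P_n$. Two equivalent routes are available. One may invoke the continuum analogue of Theorem~2 — the classical Jacobi/Wronskian identity of Adler and Moser \cite{AM} — which, because $\theta_0=1$ and $\theta_k'=\theta_{k-1}$, shows that $W(\theta_1,\dots,\theta_{2n-1})$ solves \eqref{BCh} with $P_{-1}=P_0=1$. Alternatively one passes the difference equation \eqref{dBCh} itself to the limit: writing $\widetilde{Q}^{\,(\varepsilon)}_m(x)=\varepsilon^{m(m+1)/2}Q_m(x/\varepsilon;\dots)$ (so $\widetilde{Q}^{\,(\varepsilon)}_{-1}=\widetilde{Q}^{\,(\varepsilon)}_0=1$), multiplying \eqref{dBCh} at $z=x/\varepsilon$ by $\varepsilon^{\,n^2+n+1}$ turns its left-hand side into $\varepsilon\bigl((\widetilde{Q}^{\,(\varepsilon)}_{n+1})'\widetilde{Q}^{\,(\varepsilon)}_{n-1}-\widetilde{Q}^{\,(\varepsilon)}_{n+1}(\widetilde{Q}^{\,(\varepsilon)}_{n-1})'\bigr)+O(\varepsilon^2)$ and its right-hand side into $\varepsilon\,(\widetilde{Q}^{\,(\varepsilon)}_n)^2+O(\varepsilon^2)$ — Taylor expansion in $\varepsilon$ being exact for these polynomials, with remainders uniformly bounded on compacta since their coefficients converge — so that after dividing by $\varepsilon$ the limits satisfy \eqref{BCh} with the correct initial data. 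Either way the limit is a Burchnall--Chaundy polynomial. Finally, the very same column/row–derivative argument as in the discrete case (now $\partial_{t_{2p}}\theta_k=-\tfrac1{2p}\theta_{k-2p}$ shifts a row of the Wronskian onto another one) shows that $W(\theta_1,\dots,\theta_{2n-1})$ is independent of the even times, hence a genuine function of $x+t_1$ and of $t_3,\dots,t_{2n-1}$ only; and since $\Theta$ is exactly of Adler--Moser shape $e^{\,xv+\sum_{j\ge1}a_jv^{2j+1}}$ with $a_j=t_{2j+1}/(2j+1)$ a linear rescaling of $t_{2j+1}$, these parameters form an invertible polynomial reparametrisation of the Adler--Moser parameters, related to the KdV times $T_k$ by the scaling $t_{2k+1}=4^{k}(2k+1)T_k$ quoted above (and \cite{Miura V}). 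This is precisely the assertion.

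The computation is essentially forced once the two inputs — the compact determinant $Q_n=\det\|x_{2i-j}\|$ and the scaling limit $\varepsilon^kx_k(x/\varepsilon)\to\theta_k$ — are in place: the bookkeeping of powers of $\varepsilon$ is automatic, since every permutation in the determinant expansion contributes the same power. The only points that require genuine attention are (i) the uniformity (in $x$, and in $C^1$) of the polynomial scaling limit, which is elementary, and (ii) the cross-convention identification of our odd parameters $t_{2k+1}$ with the standard KdV times, which we take over from Adler--Moser \cite{AM} together with the normalisation of the KdV hierarchy as in \cite{Miura V}.
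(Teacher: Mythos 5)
Your proposal is correct, but it takes a genuinely different route from the paper. The paper's own proof is essentially two remarks: the scaled quantity $\varepsilon^{n(n+1)/2}Q_n(x/\varepsilon,\dots)$ is polynomial in $\varepsilon$ (by the weighted homogeneity underlying the construction), so the limit is just evaluation at $\varepsilon=0$, and the identification of the limit with the Burchnall--Chaundy/KdV tau functions is delegated to Miwa's correspondence between the Hirota--Miwa equation and the KP/KdV hierarchy \cite{Miwa}, together with the normalisation of the flows in \cite{Miura V}. You instead compute the limit head-on: you pass $\varepsilon\to0$ inside the Casoratian \eqref{det3} via the generating function \eqref{F} with $u=\varepsilon v$, obtaining $\varepsilon^k x_k\to\theta_k$ with $\theta_0=1$, $\theta_k'=\theta_{k-1}$, and your observation that every permutation in the Leibniz expansion carries the same power $\varepsilon^{-n(n+1)/2}$ is exactly the explicit form of the paper's homogeneity remark; the limit is then the Adler--Moser Wronskian $W(\theta_1,\theta_3,\dots,\theta_{2n-1})$, and its Burchnall--Chaundy property \eqref{BCh} follows either from the classical Jacobi/Wronskian identity of \cite{AM} (the continuum analogue of the paper's Theorem~2) or, as you also note, by sending the difference relation \eqref{dBCh} itself to the limit -- a check I verified works, including the cancellation of the $\partial_\varepsilon$-dependence of the coefficients at order $\varepsilon$. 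What your route buys is an independent, self-contained proof that the limit satisfies \eqref{BCh} with $P_{-1}=P_0=1$, bypassing Miwa's theorem entirely; what it does not do (and the paper does not either, beyond citation) is derive the normalisation $t_{2k+1}=4^k(2k+1)T_k$ relating the parameters to the standard KdV times, which both arguments take from \cite{AM} and \cite{Miura V}. One small clarification you supply that the paper leaves implicit: the theorem's right-hand side does not display $t_1$, and your scaling $t_1\mapsto t_1/\varepsilon$, with the limit depending on $x,t_1$ only through $x+t_1$, is the reading consistent with the left-hand side $P_n(x,t_1,t_3,\dots,t_{2n-1})$.
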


Note that $\varepsilon^{\frac{n(n+1)}{2}} Q_n(\frac{x}{\varepsilon}, \frac{t_3}{\varepsilon^{3}}, \dots, \frac{t_{2n-1}}{\varepsilon^{2n-1}})$ is polynomial in $\varepsilon$ because of the homogeneity property of the $q$'s. The proof then follows from Miwa's results on the relation between the Hirota-Miwa equation and KP/KdV hierarchy \cite{Miwa} 
and by comparison with the normalisation of the KdV flows in \cite{Miura V}.

Since the initial data $q_k$ are homogeneous in $t_k$ they do not change during the continuum limit and we have $q_k= P_k(0) = c_k.$

\begin{corollary}
Laurent phenomenon for the usual Burchnall-Chaundy relation (\ref{BCh}):
$$
A_n~\! P_n(z) \in \mathbb Z [z; c_1^{\pm},\dots, c_{n-2}^{\pm}, c_{n-1}, c_{n}], \quad c_k=P_k(0).
$$
\end{corollary}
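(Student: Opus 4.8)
The plan is to deduce the corollary directly from the two theorems that precede it in this section, namely the Laurent phenomenon for the difference Burchnall--Chaundy sequence (the corollary stating $A_n\,Q_n(z)\in\mathbb Z[z;q_1^{\pm},\dots,q_{n-2}^{\pm},q_{n-1},q_n]$) and the continuum-limit theorem identifying $P_n$ with a scaled limit of $Q_n$. The key observation, already flagged in the text, is that the Cauchy data $q_k=Q_k(0)$ are \emph{homogeneous} polynomials in the $t_j$ of weight $k(k+1)/2$, so under the scaling $t_{2i-1}\mapsto t_{2i-1}/\varepsilon^{2i-1}$ (with even times eliminated) each $q_k$ acquires exactly the factor $\varepsilon^{-k(k+1)/2}$; hence $\varepsilon^{n(n+1)/2}Q_n(x/\varepsilon,\dots)$ is, term by term, a polynomial in $\varepsilon$ whose $\varepsilon\to 0$ limit $P_n$ is obtained simply by specializing $q_k$ to $c_k=P_k(0)$. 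So no genuinely new work is needed beyond tracking how the Laurent expression for $Q_n$ behaves under the limit.

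Concretely, first I would recall from Theorem (the Casoratian one) and the following corollary the explicit Laurent form $A_n\,Q_n(z)=\sum_\alpha m_\alpha(z)\,q_1^{a_1}\cdots q_n^{a_n}$ with $m_\alpha\in\mathbb Z[z]$ and $a_1,\dots,a_{n-2}\in\mathbb Z$, $a_{n-1},a_n\ge 0$. Next I would substitute $z\mapsto x/\varepsilon$, $q_k\mapsto \varepsilon^{-k(k+1)/2}q_k$ and multiply by $\varepsilon^{n(n+1)/2}$; by the homogeneity bookkeeping this produces a genuine polynomial identity in $\varepsilon$ with coefficients in $\mathbb Z[x;q_1^{\pm},\dots,q_{n-2}^{\pm},q_{n-1},q_n]$ (the integrality of the coefficients is untouched by a monomial change of variables in the $q$'s and by the $\varepsilon$-scaling). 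Then I would invoke the continuum-limit theorem to identify the $\varepsilon^0$-coefficient of this polynomial with $A_n\,P_n(x,t_3,\dots,t_{2n-1})$, and the remark immediately after it to identify the surviving initial data with $c_k=P_k(0)$. Passing to the limit (equivalently, extracting the $\varepsilon$-constant term) preserves membership in the Laurent ring, which yields exactly $A_n\,P_n(z)\in\mathbb Z[z;c_1^{\pm},\dots,c_{n-2}^{\pm},c_{n-1},c_n]$.

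The one point that needs a little care — and which I expect to be the main obstacle — is making rigorous the interchange of ``take the Laurent expansion in the $q_k$'' with ``take the $\varepsilon\to0$ limit'': a priori the limit is defined on the level of \emph{polynomials in the KdV times}, where everything is manifestly polynomial, whereas the Laurent statement lives in the larger ring $\mathbb Z[z;q_1^{\pm},\dots]$, and one must check that the limiting operation does not create new negative powers of the $c_k$ (it cannot, since $c_{n-1},c_n$ already appear only with non-negative exponents and the $\varepsilon$-scaling multiplies monomials by non-negative powers of $\varepsilon$, so the lowest-order term in $\varepsilon$ has exponent vector with the same sign pattern). A clean way to phrase this is to work over the field $\mathbb Q(q_1,\dots,q_n)$ throughout, note that $Q_n$ is a Laurent polynomial there with the stated integral coefficient structure, observe that the substitution and scaling are a ring automorphism followed by multiplication by a unit, and that the $\varepsilon^0$ part of the resulting element of $\mathbb Z[q_1^{\pm},\dots][\varepsilon]$ equals $A_nP_n$; the denominators can then only involve $q_1,\dots,q_{n-2}$ by continuity of the exponent bounds under the limit. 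Once this is set up, the proof is essentially a one-line substitution, which is presumably why the paper leaves it as a corollary.
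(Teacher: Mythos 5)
Your argument is correct and follows exactly the route the paper intends: the corollary is obtained by combining the Laurent property of $A_nQ_n$ with the continuum-limit theorem, using the homogeneity of the $q_k$ in the times (so that $q_k=P_k(0)=c_k$ and each Laurent monomial carries a non-negative power of $\varepsilon$, the $\varepsilon^0$ part being a sub-sum that cannot create new negative exponents). Your extra bookkeeping about exchanging the limit with the Laurent expansion is a faithful, slightly more explicit version of what the paper leaves implicit, so there is nothing to correct.
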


The explicit form of the first 4 polynomials is given in the Introduction.

\section{Concluding remarks}

The main question we are dealing with in this paper is basically what happens with the Laurent property when we replace a discrete equation by its functional difference version. Our point is that the answer depends very much on the type of the corresponding Cauchy problem.

It is known to the experts in integrable systems that the equation itself does not guarantee integrability, 
which holds only for particular initial value problems and functional class of the initial data. 
For example, for the KdV equation the initial value problem is integrable for decaying or periodic initial data \cite{Nov}, 
but for general initial data not much can be said. 
At the discrete level this fact is less visible and the importance of the choice of Cauchy problem is probably not well recognised. A discussion of the role the Cauchy problem plays with respect to the Laurent phenomenon for integrable equations can be found in \cite{Mase}.

To illustrate our point let us consider the same discrete KdV dynamics (\ref{dKdV}), but in the $m$ direction, and its 
functional version  by replacing $n$ by $x$:
$$
F_{m+1}(x+1)F_m(x-1)-F_{m+1}(x)F_m(x)-F_{m+1}(x-1)F_m(x+1)=0
$$
with $F_0(x)=1.$ It has the explicit solutions $F_m(x)=C_m\varphi^{mx}, \,\, \varphi=\frac{1+\sqrt{5}}{2}$,
but these correspond to particular Cauchy data satisfying $F_m(1)=\varphi^m F_m(0).$ 
What can one say about the analytic structure of the solutions for general Cauchy data (in particular, for $F_m(1)=F_m(0)=1$) ? So, in other words, what are the analytic functions interpolating integers on the vertical lines in Figure \ref{table} ?

It is clear that the case of the Burchnall-Chaundy polynomials is special, but the question is how special it is.
It would be interesting therefore to study other reductions, in particular period 3 reductions which correspond to the Boussinesq equation and which should be related to Schur-Weierstrass polynomials for trigonal curves \cite{BEL}. It is natural to link this with the theory of periodic Darboux/dressing chains \cite{VS, RW}.

It would also be interesting to know if the polynomials $Q_n$ play any special role
in the theory of hyperelliptic sigma-functions \cite{BEL1}.

\section{Acknowledgements}

A.P.V. is grateful to the Graduate School of Mathematical Sciences of the University of Tokyo 
for the support of his visit in April-July 2014, during which this work was done.
R.W. would like to acknowledge support from the Japan Society for the Promotion of Science,
through the JSPS grant: KAKENHI 24540204.

\end{document}